\documentclass[11pt]{article}

\usepackage[margin=1in]{geometry}
\usepackage{amsmath,amssymb,amsthm,mathtools}
\usepackage{booktabs}
\usepackage{enumitem}
\usepackage[numbers,sort&compress]{natbib}
\usepackage[colorlinks=true,linkcolor=blue,citecolor=blue,urlcolor=blue]{hyperref}
\usepackage{microtype}

\theoremstyle{plain}
\newtheorem{theorem}{Theorem}

\newtheorem{corollary}[theorem]{Corollary}
\newtheorem{lemma}[theorem]{Lemma}

\theoremstyle{definition}
\newtheorem{definition}[theorem]{Definition}
\newtheorem{assumption}[theorem]{Assumption}

\theoremstyle{remark}
\newtheorem{remark}[theorem]{Remark}

\newcommand{\R}{\mathbb{R}}
\newcommand{\E}{\mathbb{E}}

\newcommand{\A}{\mathcal{A}}
\newcommand{\Bset}{\mathcal{B}}
\newcommand{\M}{\mathcal{M}}
\newcommand{\Oset}{\mathcal{O}}
\newcommand{\Sset}{\mathcal{S}}
\newcommand{\Tset}{\mathcal{T}}

\title{Gaming-Resistant Insurance Contracts for Autonomous AI Agents:\\
Strategy-Proof Toll Mechanism Design}
\author{Hao-Hsuan Chen}
\date{\today}

\begin{document}
\maketitle

\begin{abstract}
Paper~A \citep{chen2026runtime} defines a time-consistent actuarial
runtime that prices each side-effect-bearing action against a
contractually fixed safe default and gates execution against a reserve
budget. It treats the operator as passive. This paper makes the
operator strategic. We characterise a five-attack space for autonomous
AI-agent insurance contracts and prove when the actuarial runtime is
gaming-resistant. Two attack surfaces---post-toll safe-default
selection and within-boundary action splitting---are closed by
Paper~A's minimal-authority and no-splitting clauses. The remaining
three require new contract clauses. First, common-control aggregation
prevents cross-boundary re-routing from reducing toll below the
boundary potential applied to total exposure. Second, interface
failures such as invalid JSON are contract-relevant events, not safety
wins: treating them as zero-toll safe defaults can reward unreliable
models, while escalation fees reverse the incentive. We validate this
interface-compliance theorem on committed cross-model traces from the
companion empirical paper \citep{chen2026insuring}. Third, a
model-identity menu with a componentwise-minimum penalty schedule makes
truthful reporting of the deployed model weakly dominant. We then
compose these clauses with Paper~A's runtime guarantees to obtain
joint incentive compatibility over the five-attack space. Finally, a
two-parameter premium family discharges operator individual
rationality and weak budget balance at the truthful equilibrium. The
result is an incentive-compatibility layer for actuarial control of
autonomous-agent side effects.
\end{abstract}

\section{Positioning}\label{sec:positioning}

Paper~A treats the operator as a passive deployment of an autonomous
agent: the operator commits to an action-set $\A_t(h_t)$, a safe-default
mapping $a^0$, an underwriting boundary $B$, and a budget $B_0$. The
operator pays the realised positive toll on executed actions and accepts
the gate's downgrade / escalate decisions. In a realistic insurance
market, however, the operator is strategic: it chooses what to deploy and
what to report given the contract, with the goal of minimising premium
or maximising business utility net of risk charges. This paper makes the
operator explicit and asks: \emph{under which attack surfaces is the
Paper~A contract strategy-proof, and which require additional contract
clauses?}

We deliberately avoid three traps.
\begin{enumerate}[leftmargin=1.5em,itemsep=2pt]
  \item \textbf{Strategy-proofness is not safety.} A contract that is
  strategy-proof against operator manipulation can still admit safety
  failures (e.g.\ adversarial users, jailbreaks). Safety is the agent
  layer's problem; gaming-resistance is the contract layer's problem.
  \item \textbf{Mechanism design is not free.} Strategy-proofness in
  general comes at the cost of efficiency (Myerson--Satterthwaite) or
  budget balance (Vickrey--Clarke--Groves)~\citep{myerson1983efficient,
  vickrey1961counterspeculation, clarke1971multipart, groves1973incentives}.
  The Paper~A contract is
  already budget-shaped via $B_0$; we look for IC under fixed budget
  rather than full social-welfare-optimal mechanisms.
  \item \textbf{Interface failures are not safety wins.} An LLM that
  produces invalid JSON has not made the deployment safer; it has
  produced a different state (interface failure) whose contract
  adjudication is a design choice. Section~\ref{sec:interface} formalises
  this.
\end{enumerate}

\section{Strategic Operator Model}\label{sec:operator}

We extend the setup of Paper~A. Let $\Oset$ denote the operator's
type space: an operator type $\theta \in \Oset$ is a tuple
\[
\theta = (\M_\theta, a^0_\theta, B_\theta, A_\theta^+)
\]
where $\M_\theta$ is the family of foundation models or agent
implementations the operator could deploy, $a^0_\theta$ is the operator's
preferred safe-default mapping (if the contract permits selection),
$B_\theta$ is the budget the operator is willing to commit, and
$A_\theta^+$ is the maximum authority set the operator might enable.

\begin{definition}[Operator strategy]\label{def:strategy}
An operator strategy is a measurable map
$\sigma: \Oset \times \Tset \to \Sset$, where $\Tset$ is the space of
contract terms (toll formula, safe-default constraints, boundary
definition $B$, envelope calibration protocol) and $\Sset$ is the space
of deployment plans (model identity, action menu, calibration submission,
reporting). The operator's utility is
\[
U_\theta(\sigma; \Tset)
= U^{\mathrm{biz}}_\theta(\sigma) - \mathbb{E}_{\theta}\!\left[
   \text{realised toll on executed actions under } \sigma\right] -
   P^{\mathrm{prem}}(\sigma),
\]
where $U^{\mathrm{biz}}_\theta$ is the operator's deployment utility,
toll is the runtime charge, and $P^{\mathrm{prem}}$ is the contract's
ex-ante premium.
\end{definition}

\begin{definition}[Five-attack space]\label{def:attacks}
We catalogue the operator's strategic surfaces. The first two are
already addressed by Paper~A; the remaining three motivate this paper.
\begin{enumerate}[label=(\alph*),leftmargin=1.8em,itemsep=2pt]
  \item \emph{Within-boundary splitting.} Decompose a priced action into
  smaller priced increments inside the same underwriting boundary
  $B$. \emph{Closed by Paper~A's no-splitting theorem (Paper~A, Theorem~11).}
  \item \emph{Post-toll safe-default selection.} Choose $a^0$ after
  observing the toll. \emph{Closed by Paper~A Definition~2 (ex ante,
  contractual).}
  \item \emph{Cross-boundary re-routing.} Spawn new entities,
  sessions, or relabel action categories so that priced increments fall
  into multiple boundaries with separately initialised exposure states.
  \emph{Addressed by Theorem~\ref{thm:boundary-agg} below.}
  \item \emph{Interface-compliance gaming.} Use a model whose output
  format is unreliable so that priced actions never reach the gate at all
  (the agent's tool call fails before evaluation). \emph{Addressed by
  Theorem~\ref{thm:interface} below.}
  \item \emph{Model-identity misreporting.} Submit the contract under
  one (reliable) model identity and deploy a different (less reliable)
  one. \emph{Addressed by Theorem~\ref{thm:type} below.}
\end{enumerate}
\end{definition}

\section{Cross-Boundary Aggregation}\label{sec:agg}

Let the underwriting boundary $B$ of Paper~A be one element of a
partition $\Bset = \{B_1, \ldots, B_K\}$ of the operator's action
universe, each with its own exposure state $E^{B_k}_t$ and boundary
potential $\Phi_{B_k}$. The total toll across the partition is
\[
\sum_k \bigl[\Phi_{B_k}(E^{B_k}_T) - \Phi_{B_k}(E^{B_k}_0)\bigr].
\]
An operator can attempt to game by routing the same economic increment
through multiple boundaries.

\begin{definition}[Common-control aggregation]\label{def:cc}
The boundary partition $\Bset$ is \emph{common-control-aggregated} if
there exists a contractually observable aggregation map
$\pi: \bigsqcup_k B_k \to B^\star$ such that the aggregate boundary
$B^\star$ has:
\begin{enumerate}[label=(\roman*),leftmargin=1.5em,itemsep=2pt]
  \item exposure $E^{B^\star}_T = \sum_k E^{B_k}_T$;
  \item boundary potential $\Phi_{B^\star}$ satisfying
  $\Phi_{B^\star}(\sum_k e_k) \ge \sum_k \Phi_{B_k}(e_k)$ for all
  $e_k \in \R^d_+$ (super-additivity in component-wise sums);
  \item operator-side observability: the operator cannot
  contractually escape inclusion in $\pi$ by relabelling, proxy-agent
  spawning, or related-party transactions.
\end{enumerate}
\end{definition}

\begin{definition}[Aggregate settlement]\label{def:aggregate-settlement}
Given a common-control aggregation map
$\pi: \bigsqcup_k B_k \to B^\star$, the contract has \emph{aggregate
settlement} if any set of priced actions attributed to the same
common-control operator is settled using the aggregate exposure
\[
E_T^{B^\star} \;=\; \sum_k E_T^{B_k}
\]
and the payable toll is
\[
\mathrm{Payable}\bigl(E_T^{B^\star}\bigr)
\;=\; \Phi_{B^\star}\bigl(E_T^{B^\star}\bigr) - \Phi_{B^\star}\bigl(E_0^{B^\star}\bigr),
\]
rather than $\sum_k [\Phi_{B_k}(E_T^{B_k}) - \Phi_{B_k}(E_0^{B_k})]$.
Equivalently, the contract treats sub-boundary routing as an accounting
representation of one aggregate economic exposure whenever common
control is verified. A more conservative \emph{max-settlement} variant
charges $\max\{\Phi_{B^\star}(\sum_k e_k) - \Phi_{B^\star}(0),\,
\sum_k[\Phi_{B_k}(e_k) - \Phi_{B_k}(0)]\}$, which avoids edge cases in
which $\Phi_{B^\star}$ is misspecified below the split sum.
\end{definition}

\begin{theorem}[Cross-boundary no-arbitrage under aggregate settlement]\label{thm:boundary-agg}
Suppose the boundary partition $\Bset = \{B_1, \ldots, B_K\}$ is
common-control-aggregated into $B^\star$ in the sense of
Definition~\ref{def:cc}, and suppose the contract uses aggregate
settlement in the sense of Definition~\ref{def:aggregate-settlement}.
Then for any operator strategy that decomposes a fixed economic exposure
increment $E^\star$ into sub-boundary increments $\{e_k\}_{k=1}^K$
satisfying $\sum_k e_k = E^\star$, the operator's payable toll is
\[
\Phi_{B^\star}(E^\star) - \Phi_{B^\star}(0).
\]
Consequently, no finite re-routing of the same economic increment across
sessions, related entities, proxy agents, or action-category labels can
reduce payable toll below the aggregate-boundary charge.

If the contract instead uses max-settlement, then payable toll is at
least $\Phi_{B^\star}(E^\star) - \Phi_{B^\star}(0)$ and at least the
split sum $\sum_k[\Phi_{B_k}(e_k) - \Phi_{B_k}(0)]$.
\end{theorem}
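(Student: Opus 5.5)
The plan is to reduce the theorem to bookkeeping on the aggregate exposure. Aggregate settlement (Definition~\ref{def:aggregate-settlement}) makes payable toll a function of $E_T^{B^\star}$ alone. Common-control aggregation (Definition~\ref{def:cc}) pins $E_T^{B^\star}$ to the true economic increment however the operator routes it.

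\emph{Step 1 (normalisation).} I will fix the initial aggregate state $E_0^{B^\star}=0$, so that $E_0^{B_k}=0$ for every $k$. This is the reading under which the statement's $\Phi_{B^\star}(0)$ and $\Phi_{B_k}(0)$ are the initial potentials. If a nonzero initial exposure is intended, the same argument goes through with $E_0^{B^\star}$ in place of $0$ and increments measured from it.

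\emph{Step 2 (aggregation).} Take any strategy producing sub-boundary increments $e_k = E_T^{B_k}-E_0^{B_k}$ with $\sum_k e_k = E^\star$. Clause (iii) of Definition~\ref{def:cc} says relabelling, proxy spawning and related-party routing cannot remove any $B_k$ from the domain of $\pi$. Hence every priced action the operator emits is attributed to $B^\star$, and clause (i) gives $E_T^{B^\star}=\sum_k E_T^{B_k}=E^\star$. Substituting into the aggregate-settlement formula gives payable toll $\Phi_{B^\star}(E^\star)-\Phi_{B^\star}(0)$. This quantity is independent of $K$ and of the decomposition $\{e_k\}$. That independence is exactly the no-arbitrage conclusion: any finite re-routing merely selects a different $\{e_k\}$ with the same sum, so it cannot reduce the charge.

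\emph{Step 3 (max-settlement).} Under max-settlement the payable toll is $\max\{X,Y\}$, where $X=\Phi_{B^\star}(E^\star)-\Phi_{B^\star}(0)$ and $Y=\sum_k[\Phi_{B_k}(e_k)-\Phi_{B_k}(0)]$. Both lower bounds then follow from $\max\{X,Y\}\ge X$ and $\max\{X,Y\}\ge Y$. I will also remark that, using super-additivity (ii) together with the normalisation $\Phi_{B^\star}(0)\le\sum_k\Phi_{B_k}(0)$, the maximum is attained at $X$. In that case the two settlement rules coincide, which justifies calling max-settlement a guard against misspecification.

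\emph{Main obstacle.} Step 2 is the step that needs care. The real content is that clause (iii) forces the entire routed exposure, including increments placed on newly spawned entities, into $\pi$. Without it, $\sum_k e_k$ could fall short of $E^\star$. I will handle this by stating explicitly that "the same economic increment" means the same exposure vector under $\pi$. Under that reading, (iii) turns the identity $E_T^{B^\star}=E^\star$ from an empirical claim into a contractual one, and the rest of the argument is immediate.
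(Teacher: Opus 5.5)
Your proposal is correct and follows the paper's route. Clause (iii) rules out out-of-partition leakage, clause (i) pins the aggregate increment to $E^\star$, aggregate settlement makes the toll depend on that increment alone, and the max-settlement bounds are just $\max\{X,Y\}\ge X$ and $\max\{X,Y\}\ge Y$. Your coincidence remark is also more careful than the paper's Step~6: super-additivity alone does not give dominance, since for $\Phi_{B_k}(e)=e$ and $\Phi_{B^\star}(E)=\max(E,c)$ with $0<E^\star<c$ one gets $X=0<Y=E^\star$, so the extra normalisation $\Phi_{B^\star}(0)\le\sum_k\Phi_{B_k}(0)$ you impose is genuinely needed.
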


\begin{proof}[Proof of Theorem~\ref{thm:boundary-agg}]
The argument proceeds in seven steps. Steps~1--5 establish the
equality under aggregate settlement and the no-arbitrage consequence;
Step~6 proves the two lower bounds under max-settlement; Step~7
disposes of edge cases. Throughout, vectors are taken componentwise in
$\R_+^d$ unless stated otherwise.

\medskip
\noindent\emph{Step 1 (Operator strategy and induced exposure path).}
Fix a realisation of an operator strategy over the contract horizon
$[0, T]$, generating a sequence of executed priced actions. Each
executed action contributes an exposure increment to some sub-boundary
in the partition $\Bset = \{B_1, \ldots, B_K\}$. Let
$e_k \in \R_+^d$ denote the cumulative incremental exposure generated
in $B_k$ over the horizon. The endpoint exposure in $B_k$ is then
\[
E^{B_k}_T \;=\; E^{B_k}_0 + e_k,
\qquad k = 1, \ldots, K,
\]
where $E^{B_k}_0$ is the pre-period baseline exposure---a contract-state
quantity fixed at inception and unaffected by the operator's
within-period strategy. By hypothesis,
\(\sum_{k=1}^K e_k = E^\star\).

\medskip
\noindent\emph{Step 2 (Common-control observability rules out
out-of-partition leakage).}
Definition~\ref{def:cc}(iii) requires that no contractually
relabelable action escape inclusion in $\pi$. The operator manoeuvres
in attack-class~(c) of Definition~\ref{def:attacks}---spawning proxy
agents, routing through related entities, splitting across sessions,
relabelling action categories---are exactly the contractually
relabelable manoeuvres that the clause forbids. Every priced increment
executed under the operator's strategy is therefore attributable to
some $B_k$ in the partition; there is no out-of-partition residual
that could be charged under a separate formula. The total executed
economic increment equals exactly $\sum_{k=1}^K e_k = E^\star$.

\medskip
\noindent\emph{Step 3 (Aggregate exposure under $\pi$).}
By Definition~\ref{def:cc}(i), the aggregation map $\pi$ preserves
componentwise exposure at every time $t$:
\(
E^{B^\star}_t = \sum_k E^{B_k}_t
\).
Applying this identity at $t = 0$ and $t = T$ and subtracting,
\[
E^{B^\star}_T - E^{B^\star}_0
\;=\; \sum_k E^{B_k}_T - \sum_k E^{B_k}_0
\;=\; \sum_k \bigl(E^{B_k}_T - E^{B_k}_0\bigr)
\;=\; \sum_k e_k
\;=\; E^\star.
\]
The aggregate increment over the period equals $E^\star$ independently
of the specific decomposition $\{e_k\}$ that produced it.

\medskip
\noindent\emph{Step 4 (Payable toll under aggregate settlement).}
By Definition~\ref{def:aggregate-settlement}, the payable toll under
aggregate settlement is computed from the aggregate exposure path of
$B^\star$:
\[
\mathrm{Payable}\!\bigl(E_T^{B^\star}\bigr)
\;=\;
\Phi_{B^\star}\!\bigl(E_T^{B^\star}\bigr)
- \Phi_{B^\star}\!\bigl(E_0^{B^\star}\bigr).
\]
Substituting $E_T^{B^\star} = E_0^{B^\star} + E^\star$ from Step~3,
\begin{equation}
\mathrm{Payable}\!\bigl(E_T^{B^\star}\bigr)
\;=\;
\Phi_{B^\star}\!\bigl(E_0^{B^\star} + E^\star\bigr)
- \Phi_{B^\star}\!\bigl(E_0^{B^\star}\bigr).
\label{eq:payable-agg}
\end{equation}
The right-hand side of~\eqref{eq:payable-agg} is a function of the
aggregate baseline $E_0^{B^\star}$ and the aggregate increment
$E^\star$ alone. It does \emph{not} depend on the decomposition
$\{e_k\}$ chosen by the operator. When the theorem is invoked at a
fresh underwriting period, where $E_0^{B^\star} = 0$ by convention,
the right-hand side simplifies to
$\Phi_{B^\star}(E^\star) - \Phi_{B^\star}(0)$, which is the form
stated in the theorem.

\medskip
\noindent\emph{Step 5 (Routing-independence and the no-arbitrage
consequence).}
Equation~\eqref{eq:payable-agg} depends on the operator's strategy only
through the aggregate increment $E^\star$. Consider any two
decompositions $\{e_k\}_{k=1}^K$ and $\{e_k'\}_{k=1}^K$ satisfying
$\sum_k e_k = \sum_k e_k' = E^\star$; both yield the same payable toll
by~\eqref{eq:payable-agg}. The operator's only strategic degree of
freedom in attack-class~(c) of Definition~\ref{def:attacks} is the
choice of such a decomposition, since by Step~2 all routings of the
operator's priced activity (over sessions, related entities, proxy
agents, or relabelled action categories) are mapped by $\pi$ into the
same aggregate boundary $B^\star$. No choice of decomposition can
reduce the payable toll below~\eqref{eq:payable-agg}. This establishes
the first claim of the theorem.

\medskip
\noindent\emph{Step 6 (Max-settlement lower bounds).}
Under max-settlement, the payable toll is, by definition,
\[
\mathrm{Payable}^{\max}
\;=\;
\max\!\Bigl\{
\Phi_{B^\star}\!\bigl(E_0^{B^\star} + E^\star\bigr)
  - \Phi_{B^\star}\!\bigl(E_0^{B^\star}\bigr),\;
\sum_{k=1}^K \bigl[
   \Phi_{B_k}(E_0^{B_k} + e_k)
   - \Phi_{B_k}(E_0^{B_k})
\bigr]
\Bigr\}.
\]
Since the maximum of a finite set is at least each of its elements,
\begin{align*}
\mathrm{Payable}^{\max}
&\;\ge\;
\Phi_{B^\star}\!\bigl(E_0^{B^\star} + E^\star\bigr)
- \Phi_{B^\star}\!\bigl(E_0^{B^\star}\bigr),\\
\mathrm{Payable}^{\max}
&\;\ge\;
\sum_{k=1}^K \bigl[
   \Phi_{B_k}(E_0^{B_k} + e_k)
   - \Phi_{B_k}(E_0^{B_k})
\bigr],
\end{align*}
which are the two lower bounds stated in the theorem (under
$E_0^{B^\star} = 0$ and $E_0^{B_k} = 0$, the bounds reduce to
$\Phi_{B^\star}(E^\star) - \Phi_{B^\star}(0)$ and to the split sum,
respectively). The first bound coincides with aggregate-settlement
payable from~\eqref{eq:payable-agg} and protects against routings that
exploit a sub-boundary potential weaker than the aggregate. The second
bound is a robustness clause that protects against parametrisations of
$\Phi_{B^\star}$ for which super-additivity
(Definition~\ref{def:cc}(ii)) fails by mis-specification, and which
might therefore underestimate aggregate exposure relative to the split
sum. When super-additivity does hold, the first bound dominates the
second pointwise---
$\Phi_{B^\star}\!\bigl(E_0^{B^\star} + E^\star\bigr)
   - \Phi_{B^\star}\!\bigl(E_0^{B^\star}\bigr)
   \ge
 \sum_k [\Phi_{B_k}(E_0^{B_k} + e_k) - \Phi_{B_k}(E_0^{B_k})]$
---and max-settlement coincides with aggregate settlement. Under
super-additivity failure, max-settlement strictly dominates aggregate
settlement and provides correct conservatism. The proof of these two
inequalities is by the definition of $\max$ alone; no further
assumption beyond Definition~\ref{def:cc} is required.

\medskip
\noindent\emph{Step 7 (Edge cases).}
\begin{enumerate}[label=\textup{(\roman*)},leftmargin=1.8em,itemsep=2pt]
  \item \emph{Trivial partition.} If $K = 1$, then $\pi$ is the identity
  on the single sub-boundary, $E^{B^\star}_t = E^{B_1}_t$ for all $t$,
  and the theorem statement reduces to the within-boundary toll
  identity of Paper~A's no-splitting theorem (Paper~A, Theorem~11).
  \item \emph{Idle sub-boundaries.} If $e_k = 0$ for some $k$, then
  $E^{B_k}_T = E^{B_k}_0$ and the corresponding term contributes zero
  to Steps~3 and~6. The conclusion is unchanged.
  \item \emph{Exposure-reducing increments.} If some component of
  $e_k$ is negative (an exposure-reducing executed action), the proof
  is unchanged provided $\Phi_{B^\star}$ is well-defined at
  $E_0^{B^\star} + E^\star$ and the partition is closed under such
  increments. Attack-class~(c) targets exposure inflation rather than
  reduction, so this case is of formal rather than strategic
  interest.
  \item \emph{Strategy randomisation.} If the operator strategy is
  random and induces a distribution over decompositions $\{e_k\}$, the
  conclusion of the theorem holds pointwise on the realised sample
  path. Taking expectations preserves the bound, with equality in the
  aggregate-settlement case whenever the aggregate increment
  $E^\star = \sum_k e_k$ is constant in the strategy's randomisation.
  \qedhere
\end{enumerate}
\end{proof}

\begin{remark}[Role of super-additivity]\label{rem:superadd}
Super-additivity in Definition~\ref{def:cc}(ii) is not the settlement
rule itself, although a natural weaker formulation would adopt it as
such. Its role here is auxiliary: it justifies why aggregate
settlement is conservative relative to split accounting, since
$\Phi_{B^\star}(\sum_k e_k) \ge \sum_k \Phi_{B_k}(e_k)$. The
no-arbitrage conclusion rests on the settlement rule of
Definition~\ref{def:aggregate-settlement}, not on super-additivity. In
contract drafting, the settlement rule corresponds to standard
re-/insurance tools: common-control attribution, related-party rules,
currency normalisation, and audit clauses that prevent resetting the
boundary state by relabelling equivalent actions. The theorem's content
is that, given these clauses, no strategy in attack-class~(c) of
Definition~\ref{def:attacks} can reduce payable toll.
\end{remark}

\section{Interface-Compliance Adjudication}\label{sec:interface}

Empirical motivation: in an early companion Paper~B pilot
(\texttt{model\_risk\_kimi\_pilot\_seed42}; $n=6$ trajectories, $3$ baselines
$\times 2$), \texttt{Kimi-K2.5} produced structurally invalid action JSON in
every sampled trajectory ($f=1.0$); the larger committed $n=25$ run
(\texttt{refund\_cross\_model}) measures $f_{\mathrm{invalid}}\approx 0.356$
(95\% CI $[0.337,0.375]$), still by far the highest of any model. The runtime
gate never evaluates a priced action
because the agent's tool call is rejected at the parsing layer. Under
Paper~A's contract this would naively register as ``zero destructive
execution'', i.e.\ an apparently safer deployment.

We formalise this as a strategic surface (attack~(d) of
Definition~\ref{def:attacks}) and characterise the contract's choice.

\begin{definition}[Interface-failure event]\label{def:if}
An \emph{interface-failure} event at step $t$ is the realisation of a
random variable $I_t \in \{0, 1\}$ measurable with respect to the
agent's actual tool call and the contract's parsing specification. We
require the contract to fix an adjudication rule
$\xi \in \{\xi_{\mathrm{safe}}, \xi_{\mathrm{escalate}}\}$ on
$\{I_t = 1\}$:
\begin{itemize}[leftmargin=1.5em,itemsep=2pt]
  \item $\xi_{\mathrm{safe}}$: emit the safe default $a^0_t(h_t)$, charge
  zero toll, treat as a successful covered event;
  \item $\xi_{\mathrm{escalate}}$: emit no action, escalate to human
  adjudication; the operator pays an escalation fee $\kappa_{\mathrm{esc}}
  \ge 0$ per interface-failure event.
\end{itemize}
The pair $(\xi, \kappa_{\mathrm{esc}})$ is part of the contract.
\end{definition}

\paragraph{One-step utility with task-failure cost.}
Fix an operator type $\theta$ and hold the distribution of intended
actions fixed. Let $f_\theta \in [0, 1]$ be the interface-failure
probability for type $\theta$, $\mu_c = \E[c_t^+]$ the expected
successful-action positive toll, $V_\theta$ the gross business value of
a successful tool-call step, $C_{\mathrm{fail}, \theta}$ the operator's
private business cost of an interface-failure event (task delay, failed
automation, lost productivity, customer impact), and
$\kappa_{\mathrm{esc}}$ the contractual escalation fee charged under
$\xi_{\mathrm{escalate}}$. The one-step expected utility contributions
are
\begin{align*}
U_\theta^{\mathrm{safe}}(f)
&= (1 - f)(V_\theta - \mu_c) - f\, C_{\mathrm{fail}, \theta},\\
U_\theta^{\mathrm{esc}}(f)
&= (1 - f)(V_\theta - \mu_c) - f\,(C_{\mathrm{fail}, \theta} + \kappa_{\mathrm{esc}}).
\end{align*}

\begin{theorem}[Interface-compliance adjudication incentives]\label{thm:interface}
Under the one-step utility model above:
\begin{enumerate}[label=\textup{(\alph*)},leftmargin=1.8em,itemsep=2pt]
  \item Under the safe-default adjudication rule $\xi_{\mathrm{safe}}$,
  increasing the interface-failure rate $f_\theta$ is privately
  attractive for the operator if and only if
  \[
  \mu_c \;>\; V_\theta + C_{\mathrm{fail}, \theta};
  \]
  equivalently, the toll avoided by an invalid tool call exceeds the
  lost business value and task-failure cost.
  \item Under the escalation rule $\xi_{\mathrm{escalate}}$, increasing
  $f_\theta$ is weakly unattractive whenever
  \[
  \kappa_{\mathrm{esc}}
  \;\ge\;
  \mu_c - V_\theta - C_{\mathrm{fail}, \theta}.
  \]
  In particular, the conservative sufficient condition
  $\kappa_{\mathrm{esc}} \ge \mu_c$ deters interface-failure gaming
  whenever $V_\theta + C_{\mathrm{fail}, \theta} \ge 0$.
\end{enumerate}
Thus interface failures are contract-relevant events. Treating invalid
tool calls as zero-toll safe outcomes can reward unreliable interfaces
when avoided tolls exceed failure costs; escalation fees can reverse
that incentive.
\end{theorem}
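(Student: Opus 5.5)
The plan is to treat both utility expressions as affine functions of $f$ and read off the sign of the slope. Everything is one-step and linear, so "increasing $f_\theta$ is privately attractive" will be formalised as $\partial U/\partial f > 0$. Since each $U$ is affine in $f$, this is equivalent to $U(f') > U(f)$ for every pair $f' > f$ in $[0,1]$. I will state this identification explicitly at the outset so that both parts reduce to a sign computation.

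For part (a), I will expand
\[
U_\theta^{\mathrm{safe}}(f) = (V_\theta - \mu_c) - f\,(V_\theta - \mu_c + C_{\mathrm{fail},\theta}).
\]
Its slope is $\mu_c - V_\theta - C_{\mathrm{fail},\theta}$. This slope is strictly positive if and only if $\mu_c > V_\theta + C_{\mathrm{fail},\theta}$, which gives the biconditional. I will then add one sentence of interpretation. Each marginal unit of failure probability swaps a successful step, worth $V_\theta - \mu_c$, for a failure, worth $-C_{\mathrm{fail},\theta}$. The net gain of that swap is the avoided toll minus the lost value and the failure cost, which is exactly the "equivalently" clause.

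For part (b), the same expansion gives a slope of $\mu_c - V_\theta - C_{\mathrm{fail},\theta} - \kappa_{\mathrm{esc}}$. This slope is $\le 0$ exactly when $\kappa_{\mathrm{esc}} \ge \mu_c - V_\theta - C_{\mathrm{fail},\theta}$. A nonpositive slope means $U^{\mathrm{esc}}_\theta$ is weakly decreasing in $f$, so raising $f$ is weakly unattractive. For the "in particular" clause, I will note that $V_\theta + C_{\mathrm{fail},\theta} \ge 0$ implies $\mu_c \ge \mu_c - V_\theta - C_{\mathrm{fail},\theta}$. Hence $\kappa_{\mathrm{esc}} \ge \mu_c$ implies the threshold condition. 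The closing "Thus" sentence then follows from comparing the two slopes, since the escalation fee shifts the slope down by exactly $\kappa_{\mathrm{esc}}$.

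The only delicate point is interpretive rather than computational: what "increasing $f_\theta$ is attractive" means. The intended reading takes the distribution of intended actions, and hence $\mu_c$, $V_\theta$ and $C_{\mathrm{fail},\theta}$, as independent of $f$, as the setup stipulates. I will make that dependence assumption explicit. I will also note that part (a) uses strict preference while part (b) uses weak preference, which accounts for the strict versus non-strict inequalities. Beyond that, the proof is direct algebra.
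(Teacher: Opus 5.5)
Your proposal is correct and matches the paper's proof essentially step for step. You expand both one-step utilities as affine functions of $f$ and read off the slopes $\mu_c - V_\theta - C_{\mathrm{fail},\theta}$ and $\mu_c - V_\theta - C_{\mathrm{fail},\theta} - \kappa_{\mathrm{esc}}$. You then read strict and weak (un)attractiveness from their signs and derive the conservative condition $\kappa_{\mathrm{esc}} \ge \mu_c$ from $V_\theta + C_{\mathrm{fail},\theta} \ge 0$, exactly as the paper does. The one cosmetic difference is in part (a): the paper phrases necessity via the existence of a single improving pair $f' > f$, while you require improvement for all pairs; for an affine function the two are equivalent.
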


\begin{proof}[Proof of Theorem~\ref{thm:interface}]
Both parts follow from the same observation that the one-step utility
is affine in $f \in [0, 1]$. We state this observation first, then
prove the iff direction for $\xi_{\mathrm{safe}}$, the sufficient
direction for $\xi_{\mathrm{escalate}}$, the conservative simplification
$\kappa_{\mathrm{esc}} \ge \mu_c$, and finally edge cases. Throughout,
the operator type $\theta$, the intended-action distribution, and the
expectation $\mu_c = \E[c_t^+]$ are held fixed; the only parameter
varying is the interface-failure rate $f_\theta$, which the operator
influences through model selection. To lighten notation, write
$f$ for $f_\theta$ and $C_{\mathrm{fail}}$ for $C_{\mathrm{fail},\theta}$.

\medskip
\noindent\emph{Step 1 (Affine structure of $U_\theta^{\xi}$ in $f$).}
Expanding the products in the utility definitions immediately above the
theorem,
\begin{align*}
U_\theta^{\mathrm{safe}}(f)
&= (V_\theta - \mu_c) - f\,\bigl(V_\theta - \mu_c + C_{\mathrm{fail}}\bigr),\\
U_\theta^{\mathrm{esc}}(f)
&= (V_\theta - \mu_c) - f\,\bigl(V_\theta - \mu_c + C_{\mathrm{fail}}
                          + \kappa_{\mathrm{esc}}\bigr).
\end{align*}
Both are affine in $f$ on $[0, 1]$, with constant slopes
\begin{equation}
\frac{\partial U_\theta^{\mathrm{safe}}}{\partial f}
\;=\;
\mu_c - V_\theta - C_{\mathrm{fail}},
\qquad
\frac{\partial U_\theta^{\mathrm{esc}}}{\partial f}
\;=\;
\mu_c - V_\theta - C_{\mathrm{fail}} - \kappa_{\mathrm{esc}}.
\label{eq:slope}
\end{equation}
Affineness has two consequences exploited below. First, monotonicity in
$f$ is determined entirely by the sign of the slope: $U_\theta^{\xi}$
is strictly increasing iff its slope is strictly positive, non-decreasing
iff non-negative, and non-increasing iff non-positive. Second, the
extremum of $U_\theta^{\xi}$ on $[0, 1]$ is attained at one of the
endpoints, so global comparative statics reduce to local sign analysis.

\medskip
\noindent\emph{Step 2 (Part (a): $\xi_{\mathrm{safe}}$, both directions
of the iff).}

\textit{Sufficiency.}
Suppose $\mu_c > V_\theta + C_{\mathrm{fail}}$. By~\eqref{eq:slope}, the
slope of $U_\theta^{\mathrm{safe}}$ is strictly positive. Hence for any
$f, f' \in [0, 1]$ with $f' > f$,
\[
U_\theta^{\mathrm{safe}}(f')
\;-\;
U_\theta^{\mathrm{safe}}(f)
\;=\;
(f' - f)\,(\mu_c - V_\theta - C_{\mathrm{fail}})
\;>\;
0.
\]
Increasing $f$ strictly increases the operator's utility, so it is
privately attractive.

\textit{Necessity.}
Conversely, suppose increasing $f$ is privately attractive, i.e.\
there exists $f, f' \in [0, 1]$ with $f' > f$ and
$U_\theta^{\mathrm{safe}}(f') > U_\theta^{\mathrm{safe}}(f)$. By
affineness this difference equals $(f' - f)(\mu_c - V_\theta -
C_{\mathrm{fail}})$, and the right-hand side is positive only when
$\mu_c - V_\theta - C_{\mathrm{fail}} > 0$, i.e.\
$\mu_c > V_\theta + C_{\mathrm{fail}}$.

The iff statement of part~(a) follows.

\medskip
\noindent\emph{Step 3 (Part (b): $\xi_{\mathrm{escalate}}$, sufficient
condition on $\kappa_{\mathrm{esc}}$).}
Under $\xi_{\mathrm{esc}}$, ``weakly unattractive to increase $f$'' means
that for all $f, f' \in [0, 1]$ with $f' \ge f$,
\[
U_\theta^{\mathrm{esc}}(f') \;\le\; U_\theta^{\mathrm{esc}}(f).
\]
By affineness this is equivalent to
\(
(f' - f)
\bigl(\mu_c - V_\theta - C_{\mathrm{fail}}
       - \kappa_{\mathrm{esc}}\bigr)
\le 0
\)
for all $f' \ge f$, which holds iff
\[
\mu_c - V_\theta - C_{\mathrm{fail}} - \kappa_{\mathrm{esc}}
\;\le\; 0,
\quad\text{equivalently}\quad
\kappa_{\mathrm{esc}}
\;\ge\;
\mu_c - V_\theta - C_{\mathrm{fail}}.
\]
This is the sufficient condition stated in part~(b). Note that when
the inequality holds with equality (slope exactly zero), the operator
is \emph{indifferent} to $f$: $U_\theta^{\mathrm{esc}}$ is constant on
$[0, 1]$, so increasing $f$ is weakly but not strictly unattractive.
The theorem's ``weakly'' qualifier is therefore necessary; replacing
$\ge$ by $>$ would require strict monotonicity, which fails on the
indifference boundary.

\medskip
\noindent\emph{Step 4 (Conservative sufficient condition
$\kappa_{\mathrm{esc}} \ge \mu_c$).}
Suppose $V_\theta + C_{\mathrm{fail}} \ge 0$. Then
\[
\mu_c - V_\theta - C_{\mathrm{fail}}
\;\le\;
\mu_c.
\]
Hence any $\kappa_{\mathrm{esc}} \ge \mu_c$ satisfies
$\kappa_{\mathrm{esc}} \ge \mu_c - V_\theta - C_{\mathrm{fail}}$, which
by Step~3 makes increasing $f$ weakly unattractive. The condition
$\kappa_{\mathrm{esc}} \ge \mu_c$ is contract-implementable without
observing $V_\theta$ or $C_{\mathrm{fail}}$ (both of which are private
information of the operator), at the cost of being slack when
$V_\theta + C_{\mathrm{fail}} > 0$ strictly.

\medskip
\noindent\emph{Step 5 (Edge cases).}
\begin{enumerate}[label=\textup{(\roman*)},leftmargin=1.8em,itemsep=2pt]
  \item \emph{Boundary $f \in \{0, 1\}$.} The strategy parameter $f$ is
  constrained to the closed interval $[0, 1]$. By affineness the
  maximum of $U_\theta^{\xi}$ on $[0, 1]$ is attained at $f = 1$
  whenever the slope is positive and at $f = 0$ whenever the slope is
  negative. The iff and sufficient conditions of parts~(a) and~(b)
  characterise the slope sign, so the boundary effect adds no new
  case: an operator who prefers higher $f$ chooses a model with
  $f_\theta$ as close to $1$ as available, and an operator who prefers
  lower $f$ chooses one with $f_\theta$ as close to $0$ as available.

  \item \emph{Zero expected positive toll, $\mu_c = 0$.} Substituting
  $\mu_c = 0$ in~\eqref{eq:slope} gives slopes
  $-V_\theta - C_{\mathrm{fail}}$ and
  $-V_\theta - C_{\mathrm{fail}} - \kappa_{\mathrm{esc}}$,
  both non-positive under
  $V_\theta + C_{\mathrm{fail}} \ge 0$. Increasing $f$ is then weakly
  unattractive under either adjudication rule, because there is no
  toll to evade in the first place. The theorem is consistent with
  this trivial regime; gaming incentives arise only when expected
  toll is strictly positive.

  \item \emph{Pathological regime $V_\theta + C_{\mathrm{fail}} < 0$.}
  This corresponds to $V_\theta < 0$ (the operator dislikes successful
  actions) or $C_{\mathrm{fail}} < 0$ (the operator benefits from task
  failure). Both regimes are inconsistent with the assumption that the
  operator deploys the agent to derive business value; the operator
  would be deploying interface-failure-prone agents as the deliberate
  business model. In this regime,
  $\mu_c - V_\theta - C_{\mathrm{fail}} > \mu_c$, so the conservative
  condition $\kappa_{\mathrm{esc}} \ge \mu_c$ is no longer
  sufficient and the contract must use the tighter
  $\kappa_{\mathrm{esc}} \ge \mu_c - V_\theta - C_{\mathrm{fail}}$ of
  Step~3. The theorem's caveat $V_\theta + C_{\mathrm{fail}} \ge 0$
  exactly delineates the standard regime in which the contract can
  use the public quantity $\mu_c$ as the escalation fee.

  \item \emph{Multi-step aggregation.} If interface-failure events are
  i.i.d.\ across $T$ steps with rate $f$, the expected $T$-step utility
  is $T \cdot U_\theta^{\xi}(f)$ and the same comparative statics
  apply. For serially dependent failures, the per-step expected
  utility is replaced by the trajectory-average expectation; the sign
  of the comparative static is preserved provided the trajectory
  average $\bar f$ is monotone in the operator's model choice. The
  contract observes $\bar f$ over the trajectory ensemble; in the
  Paper~B formal-core panel, each model--baseline cell contains
  10 trajectories (30 per model across B0/B2/B3), while pilot cells
  report their smaller $n$ explicitly. Thus $\bar f$ is the
  operationally relevant quantity, not the per-step $f$.

  \item \emph{Non-monetary escalation costs.} If $\xi_{\mathrm{escalate}}$
  imposes non-monetary costs in addition to $\kappa_{\mathrm{esc}}$
  (e.g.\ contract-clause violations, reputation penalties),
  the slope under $\xi_{\mathrm{esc}}$ in~\eqref{eq:slope}
  decreases further and the sufficient conditions of Steps~3--4
  remain valid \emph{a fortiori}. Non-monetary costs strengthen rather
  than weaken the deterrence; the proof goes through with
  $\kappa_{\mathrm{esc}}$ interpreted as a lower bound on the
  effective per-failure penalty rather than an exact value.
  \qedhere
\end{enumerate}
\end{proof}

\begin{remark}[Empirical resonance]\label{rem:kimi}
The early $n=6$ Kimi pilot in the Paper~B sandbox recorded a 100\%
interface-failure rate for \texttt{Kimi-K2.5} ($f=1.0$ across all $6$
trajectories); the larger committed $n=25$ run measures
$f_{\mathrm{invalid}}\approx 0.356$, on the same task where
\texttt{gpt-4.1-mini} produced mean loss \$1080 and $0.8$ destructive
executions per trajectory under B0 (formal-core panel). Under
$\xi_{\mathrm{safe}}$ the operator would pay nothing while deploying an
unreliable model and the underwriter would receive no premium signal
that the model is unreliable. Under $\xi_{\mathrm{escalate}}$ with
appropriate $\kappa_{\mathrm{esc}}$ the contract surfaces this exactly
as a distinct premium category. The adjudication choice is a contract
design lever, not an empirical accident.
\end{remark}

\begin{remark}[Quantified cross-model validation]\label{rem:t2-validation}
Theorem~\ref{thm:interface} can be instantiated directly on the committed
Paper~B cross-model \texttt{model\_risk} runs (the Kimi-pilot and formal-core
panels), with no new model calls. The
model-induced interface-failure rate is $f = 1.0$ for \texttt{Kimi-K2.5} in the
$n=6$ pilot (structurally invalid action JSON in all $6$ sampled trajectories;
the larger committed $n=25$ run measures $f_{\mathrm{invalid}}\approx 0.356$,
95\% CI $[0.337,0.375]$) and $f = 0$ for
the reliable models \texttt{gpt-4.1-mini}, \texttt{gpt-5.4-mini}, and
\texttt{DeepSeek-V3.1}; transient Azure infrastructure failures are tracked
separately and excluded from $f$ (numerator and denominator). The mean
gate-computed side-effect reserve---the quantity an invalid call lets the
operator evade by failing before pricing---is $\mu_c \approx 7131$ (per-baseline
$\text{B2} \approx 7096$, $\text{B3} \approx 7173$), roughly five times the
realized per-event destructive loss ($\approx 1350$), consistent with the
conservative reserve used by the Paper~A/B gate. Sweeping the exogenous
business parameters, the part~(a) perverse-incentive regime
$\mu_c > V_\theta + C_{\mathrm{fail}}$ holds in $45$ of $48$ grid cells: under
$\xi_{\mathrm{safe}}$ the high envelope toll makes the do-nothing model
operator-preferred across almost the entire plausible range. At a
representative cell $V_\theta = 1350$, $C_{\mathrm{fail}} = 500$, the
safe-default rule yields one-step operator utilities
$U^{\mathrm{safe}}_\theta(\text{reliable}) = -5781.6$ versus
$U^{\mathrm{safe}}_\theta(\text{Kimi}) = -500.0$, so the broken interface is
preferred; the part~(b) threshold $\kappa_{\mathrm{esc}}^\star = \mu_c -
V_\theta - C_{\mathrm{fail}} \approx 5282$ reverses the ranking. The observed
$f$ is bimodal, so this instantiates the theorem at its measured endpoints
rather than along a continuous sweep, and $(V_\theta, C_{\mathrm{fail}})$ are
exogenous business parameters; the empirical content is that \emph{real}
interface-failure rates and \emph{real} gate-computed conservative reserves land
inside the part~(a) regime and pin the part~(b) escalation fee. Reproduced by the
\texttt{interface\_incentive\_analysis} module; full provenance in the Paper~B
demo bundle (\texttt{interface\_incentive\_validation.md}).
\end{remark}

\section{Strategy-Proof Model-Identity Reporting}\label{sec:type}

The final attack surface is the operator submitting the contract under
one model identity but deploying another at runtime. The contract has
access to the deployed model's interface failure rate $f_\theta$ and
empirical destructive-attempt rate $d_\theta$ (Paper~B metrics);
together these form a posterior signal on the operator's true type.

\begin{definition}[Type-reporting contract]\label{def:type-contract}
A \emph{type-reporting contract} consists of:
\begin{enumerate}[label=(\roman*),leftmargin=1.5em,itemsep=2pt]
  \item a finite set of declarable types
  $\Theta = \{\theta_1, \ldots, \theta_J\}$ (each a class of model with
  empirical priors on $f_\theta, d_\theta$);
  \item a menu of safe defaults $\{a^0_{\theta_j}\}_{j=1}^J$ and toll
  potentials $\{\Phi_{\theta_j}\}_{j=1}^J$, with $\theta_j$ assigned
  more conservative safe defaults and a larger toll potential as the
  prior $f$-rate or $d$-rate increases (component-wise dominance);
  \item a verification protocol that compares the contract's declared
  type $\hat\theta$ to the deployed model's observed posterior signal
  $(\hat f, \hat d)$ over the trajectory ensemble.
\end{enumerate}
The operator declares $\hat\theta$ at contract inception; the runtime
gate uses the corresponding $a^0_{\hat\theta}, \Phi_{\hat\theta}$; at
trajectory ensemble completion the underwriter compares
$(\hat f, \hat d)$ to the posterior associated with $\hat\theta$ and
levies a misreporting penalty if the signal is incompatible.
\end{definition}

\paragraph{Misreport utility and detectability.}
For true type $\theta$ reporting $r \in \Theta$, let $U^0_{\theta r}$
denote the operator's expected utility before any misreporting penalty,
and let $q_{\theta r} \in [0, 1]$ be the probability that the
verification protocol detects incompatibility between true type
$\theta$ and report $r$; by construction $q_{\theta\theta} = 0$. Define
the gross misreporting gain
\[
\Delta_{\theta r} \;=\; U^0_{\theta r} - U^0_{\theta\theta}.
\]
If report $r$ is detected as incompatible, the contract charges
penalty $\kappa_r$, so expected utility under report $r \neq \theta$ is
\[
U_{\theta r} \;=\; U^0_{\theta r} - q_{\theta r}\,\kappa_r.
\]

\begin{assumption}[Detectability of profitable misreports]\label{ass:detect-profitable}
For every pair $(\theta, r)$ with $r \neq \theta$, if
$\Delta_{\theta r} > 0$ then $q_{\theta r} > 0$.
\end{assumption}

Without Assumption~\ref{ass:detect-profitable}, a profitable but
undetectable misreport cannot be deterred by any finite penalty.

\begin{theorem}[Minimum IC penalty for finite model-type reporting]\label{thm:type}
Under Assumption~\ref{ass:detect-profitable}, truthful model-type
reporting is weakly dominant if and only if, for each report $r \in
\Theta$, the misreporting penalty satisfies
\[
\kappa_r \;\ge\; \kappa_r^\star
\;:=\; \max_{\theta \neq r:\, q_{\theta r} > 0}
\frac{[\Delta_{\theta r}]_+}{q_{\theta r}},
\]
with the convention that the maximum over an empty set is zero.
Moreover, $\kappa_r^\star$ is the componentwise minimal penalty schedule
that implements truthfulness.
\end{theorem}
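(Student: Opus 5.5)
The plan is to reduce weak dominance of truthful reporting to a family of pairwise inequalities, one per (true type, report) pair, and then solve each inequality for $\kappa_r$. Truthful reporting is weakly dominant exactly when, for every true type $\theta \in \Theta$ and every report $r \neq \theta$, we have $U_{\theta\theta} \ge U_{\theta r}$. Since $q_{\theta\theta} = 0$, the truthful utility is $U_{\theta\theta} = U^0_{\theta\theta}$. Substituting the definition of $U_{\theta r}$, each constraint becomes
\[
q_{\theta r}\,\kappa_r \;\ge\; \Delta_{\theta r}.
\]
This is the only structural step. The penalty $\kappa_r$ is indexed by the report alone, so the constraints for a fixed $r$ involve only the single scalar $\kappa_r$. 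The system therefore decouples across reports, and I will analyse each $r$ separately.

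For fixed $r$, I split the true types $\theta \neq r$ into two classes.

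\emph{Undetectable misreports, $q_{\theta r} = 0$.} The constraint reads $0 \ge \Delta_{\theta r}$. By the contrapositive of Assumption~\ref{ass:detect-profitable}, this holds automatically, so it imposes no condition on $\kappa_r$. This is precisely where the assumption enters.

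\emph{Detectable misreports, $q_{\theta r} > 0$.} Dividing by $q_{\theta r}$, the constraint is equivalent to $\kappa_r \ge \Delta_{\theta r}/q_{\theta r}$.

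The theorem's formula uses the positive part $[\Delta_{\theta r}]_+$, whereas this constraint has $\Delta_{\theta r}$. I will reconcile the two using the implicit requirement $\kappa_r \ge 0$, since a penalty is a charge; this is the one point I would flag and state explicitly. Given $\kappa_r \ge 0$, the conditions $\kappa_r \ge \Delta_{\theta r}/q_{\theta r}$ for all detectable $\theta$ are jointly equivalent to $\kappa_r \ge [\Delta_{\theta r}]_+/q_{\theta r}$ for all such $\theta$. Taking the maximum over this finite set gives $\kappa_r \ge \kappa_r^\star$. The empty-set convention yields $\kappa_r^\star = 0$, which is consistent with nonnegativity. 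Because the set is finite ($\Theta$ is finite), the maximum is attained, which gives both directions of the iff.

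Minimality follows because each step is an equivalence. The feasible set of schedules $(\kappa_r)_r$ that implement truthfulness is exactly the product of half-lines $\prod_r [\kappa_r^\star, \infty)$. This set has the unique componentwise minimum $(\kappa_r^\star)_r$. Moreover, any schedule with $\kappa_r < \kappa_r^\star$ for some $r$ violates the constraint of the maximising $\theta$, which is strictly profitable there.

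I expect the main obstacle to be bookkeeping rather than depth. The delicate points are three: stating the nonnegativity of penalties so that the positive part is justified; making sure the iff uses Assumption~\ref{ass:detect-profitable} only for the $q_{\theta r} = 0$ pairs; and noting that weak dominance here is taken over reports for each fixed true type, which is the appropriate notion because the operator's only choice is its report.
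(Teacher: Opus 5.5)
Your proposal is correct and is essentially the paper's proof: you reduce weak dominance to the pairwise inequalities $q_{\theta r}\kappa_r \ge \Delta_{\theta r}$, observe that they decouple by report, solve for $\kappa_r$, and read off minimality. The differences are cosmetic. You split on $q_{\theta r}=0$ versus $q_{\theta r}>0$ (invoking the contrapositive of Assumption~\ref{ass:detect-profitable}) rather than on the sign of $\Delta_{\theta r}$, and you state explicitly the nonnegativity $\kappa_r \ge 0$ that the paper uses tacitly in its Case~A and Step~5; the paper also adds a one-line linearity argument that randomised reports are dominated by pure ones, which you could include.
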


\begin{proof}[Proof of Theorem~\ref{thm:type}]
The proof proceeds in seven steps. Step~1 sets up the strategy space
and reduces weak dominance to a pairwise IC condition. Step~2 derives
the pairwise IC inequality and splits into trivial and binding cases.
Steps~3--4 prove sufficiency and necessity of $\kappa_r \ge \kappa_r^\star$
for weak dominance. Step~5 establishes componentwise minimality.
Step~6 demonstrates that Assumption~\ref{ass:detect-profitable} cannot
be dispensed with: without it, no finite penalty schedule implements
truthfulness in the worst case. Step~7 disposes of edge cases.

\medskip
\noindent\emph{Step 1 (Strategy space and weak dominance reduction).}
A reporting strategy is a measurable map
$\rho : \Theta \to \Delta(\Theta)$ from the operator's true type
$\theta$ to a distribution over declared reports $r \in \Theta$.
Truthful reporting is $\rho_{\rm tru}(\theta) = \delta_\theta$, the
Dirac mass on $\theta$. We adopt the standard convention
$\kappa_{\theta\theta} = 0$ on the diagonal, so that
$U_{\theta\theta} = U^0_{\theta\theta}$ regardless of the contract's
penalty schedule.

Truthful reporting is \emph{weakly dominant} if, for every true type
$\theta \in \Theta$ and every alternative strategy
$\rho : \Theta \to \Delta(\Theta)$,
\[
U_{\theta\theta}
\;=\;
\mathbb{E}_{r \sim \rho_{\rm tru}(\theta)}[U_{\theta r}]
\;\ge\;
\mathbb{E}_{r \sim \rho(\theta)}[U_{\theta r}].
\]
Because expected utility is linear in the report distribution, this
inequality holds for every $\rho$ if and only if it holds for every
deterministic deviation $\rho(\theta) = \delta_r$ with $r \in \Theta$.
Equivalently, weak dominance of $\rho_{\rm tru}$ reduces to the
finite collection of pairwise IC conditions
\begin{equation}
U_{\theta\theta} \;\ge\; U_{\theta r}
\qquad \forall\, \theta \in \Theta,\,\forall\, r \in \Theta.
\label{eq:wd}
\end{equation}
The case $r = \theta$ in~\eqref{eq:wd} holds automatically by the
diagonal convention; we therefore restrict attention to $r \ne \theta$
throughout.

\medskip
\noindent\emph{Step 2 (Pairwise IC inequality and case split).}
For $r \ne \theta$, the definition of $U_{\theta r}$ above the theorem
statement gives $U_{\theta r} = U^0_{\theta r} - q_{\theta r}\,\kappa_r$.
Substituting into~\eqref{eq:wd},
\[
U^0_{\theta\theta} \;\ge\; U^0_{\theta r} - q_{\theta r}\,\kappa_r,
\]
or equivalently
\begin{equation}
q_{\theta r}\,\kappa_r \;\ge\; \Delta_{\theta r}.
\label{eq:pairwise-ic}
\end{equation}
We split into two cases according to the sign of $\Delta_{\theta r}$.

\textit{Case A: $\Delta_{\theta r} \le 0$.}
Misreporting from $\theta$ to $r$ is not profitable in the absence of
any penalty. The right-hand side of~\eqref{eq:pairwise-ic} is
non-positive, and the left-hand side $q_{\theta r}\,\kappa_r$ is
non-negative for any $\kappa_r \ge 0$. The inequality holds automatically
and contributes no constraint on $\kappa_r$.

\textit{Case B: $\Delta_{\theta r} > 0$.}
Misreporting is privately profitable; deterrence requires a strictly
positive expected penalty. Assumption~\ref{ass:detect-profitable} gives
$q_{\theta r} > 0$, so dividing~\eqref{eq:pairwise-ic} by $q_{\theta r}$
is valid and yields
\begin{equation}
\kappa_r \;\ge\; \frac{\Delta_{\theta r}}{q_{\theta r}}
\;=\; \frac{[\Delta_{\theta r}]_+}{q_{\theta r}}.
\label{eq:pairwise-ratio}
\end{equation}
The second equality uses $\Delta_{\theta r} > 0 \Rightarrow
[\Delta_{\theta r}]_+ = \Delta_{\theta r}$. In Case~A, the same equation
holds vacuously because $[\Delta_{\theta r}]_+ = 0$.

\medskip
\noindent\emph{Step 3 (Sufficiency: $\kappa_r \ge \kappa_r^\star$
implies weak dominance).}
Fix $r \in \Theta$ and suppose $\kappa_r \ge \kappa_r^\star$, where
\[
\kappa_r^\star
\;=\;
\max_{\theta \ne r:\, q_{\theta r} > 0}
\frac{[\Delta_{\theta r}]_+}{q_{\theta r}}
\]
with the convention $\max \emptyset = 0$. We show
that~\eqref{eq:pairwise-ic} holds for every $\theta \ne r$.

If $\Delta_{\theta r} \le 0$, Case~A of Step~2
gives~\eqref{eq:pairwise-ic} automatically. If $\Delta_{\theta r} > 0$,
Assumption~\ref{ass:detect-profitable} places $\theta$ in the index
set $\{\theta' \ne r : q_{\theta' r} > 0\}$ over which the maximum
defining $\kappa_r^\star$ is taken, and therefore
\[
\kappa_r
\;\ge\; \kappa_r^\star
\;\ge\; \frac{[\Delta_{\theta r}]_+}{q_{\theta r}}
\;=\; \frac{\Delta_{\theta r}}{q_{\theta r}}.
\]
Multiplying through by $q_{\theta r} > 0$ recovers~\eqref{eq:pairwise-ic}.
Hence~\eqref{eq:pairwise-ic} holds for every $\theta \ne r$, and the
weak-dominance condition~\eqref{eq:wd} is satisfied for the report
$r$. Repeating the argument for every $r \in \Theta$ proves that
truthful reporting is weakly dominant.

\medskip
\noindent\emph{Step 4 (Necessity: weak dominance implies
$\kappa_r \ge \kappa_r^\star$).}
Conversely, suppose truthful reporting is weakly dominant under the
penalty schedule $(\kappa_1, \ldots, \kappa_J)$. Fix $r \in \Theta$. If
the index set $\{\theta \ne r : q_{\theta r} > 0\}$ is empty, the
empty-max convention gives $\kappa_r^\star = 0$ and any $\kappa_r \ge 0$
satisfies $\kappa_r \ge \kappa_r^\star$ trivially.

Otherwise, pick any $\theta \ne r$ with $q_{\theta r} > 0$. If
$\Delta_{\theta r} > 0$, weak dominance gives~\eqref{eq:pairwise-ic};
together with $q_{\theta r} > 0$, this yields
\[
\kappa_r \;\ge\; \frac{\Delta_{\theta r}}{q_{\theta r}}
\;=\; \frac{[\Delta_{\theta r}]_+}{q_{\theta r}}.
\]
If $\Delta_{\theta r} \le 0$, then $[\Delta_{\theta r}]_+/q_{\theta r}
= 0 \le \kappa_r$ trivially. In either subcase,
$\kappa_r \ge [\Delta_{\theta r}]_+/q_{\theta r}$. Taking the maximum
over $\theta$ in the index set,
\[
\kappa_r
\;\ge\;
\max_{\theta \ne r:\, q_{\theta r} > 0}
\frac{[\Delta_{\theta r}]_+}{q_{\theta r}}
\;=\; \kappa_r^\star.
\]

\medskip
\noindent\emph{Step 5 (Componentwise minimality of $\kappa_r^\star$).}
We show that setting $\kappa_r = \kappa_r^\star$ exactly suffices for
the pairwise IC at component $r$, and that no value $\kappa_r < \kappa_r^\star$
does. The first claim is immediate from Step~3: substituting $\kappa_r = \kappa_r^\star$ in the sufficiency argument preserves all the inequalities.

For the second claim, suppose $\kappa_r < \kappa_r^\star$. If the index
set $\{\theta \ne r : q_{\theta r} > 0\}$ is empty, $\kappa_r^\star = 0$
and $\kappa_r < 0$, which is excluded by $\kappa_r \ge 0$; the case is
vacuous. Otherwise, the index set is finite and non-empty, so the
maximum defining $\kappa_r^\star$ is attained by some $\theta^\dagger$
with $\Delta_{\theta^\dagger r} > 0$ and $q_{\theta^\dagger r} > 0$.
At this binding type,
\[
\kappa_r
\;<\; \kappa_r^\star
\;=\; \frac{\Delta_{\theta^\dagger r}}{q_{\theta^\dagger r}},
\]
so $q_{\theta^\dagger r}\,\kappa_r < \Delta_{\theta^\dagger r}$, which
violates~\eqref{eq:pairwise-ic} for the pair $(\theta^\dagger, r)$. By
Step~1's reduction, the strategy $\rho^\dagger(\theta^\dagger) = \delta_r$
gives type $\theta^\dagger$ strictly higher expected utility than
truthful reporting, contradicting weak dominance.

Therefore $\kappa_r^\star$ is the smallest value of $\kappa_r$
consistent with weak dominance of truthful reporting, and the schedule
$\kappa^\star = (\kappa_1^\star, \ldots, \kappa_J^\star)$ is
componentwise minimal among all IC schedules.

\medskip
\noindent\emph{Step 6 (Necessity of Assumption~\ref{ass:detect-profitable}).}
The theorem invokes detectability of profitable misreports as a
maintained assumption. We show it is essential: without it, weak
dominance of truthful reporting cannot be implemented by any finite
penalty schedule in some instances.

Suppose Assumption~\ref{ass:detect-profitable} fails, i.e.\ there exist
$\theta \ne r$ with $\Delta_{\theta r} > 0$ and $q_{\theta r} = 0$.
Then for the pair $(\theta, r)$, inequality~\eqref{eq:pairwise-ic}
becomes
\[
0 \cdot \kappa_r \;\ge\; \Delta_{\theta r} > 0,
\]
which is contradictory regardless of $\kappa_r \in [0, \infty)$. The
misreport $r$ is then strictly preferred by type $\theta$ to truthful
reporting under every finite penalty schedule. Weak-dominant truthful
reporting is therefore unimplementable: the only contractual remedies
are to redesign the verification protocol to raise $q_{\theta r}$
above zero, to restrict the menu so that the offending report is
not declarable, or to accept that report $r$ admits an
undetectable-and-profitable misreport from some $\theta$.

\medskip
\noindent\emph{Step 7 (Edge cases).}
\begin{enumerate}[label=\textup{(\roman*)},leftmargin=1.8em,itemsep=2pt]
  \item \emph{Trivial type space.} If $|\Theta| = 1$, no misreport is
  possible, the index set in $\kappa_r^\star$ is empty for the unique
  $r$, and $\kappa_r^\star = 0$ by the empty-max convention. The
  theorem holds vacuously.

  \item \emph{Report with no profitable deviation.} If, for some
  $r \in \Theta$, no $\theta \ne r$ has $\Delta_{\theta r} > 0$, then
  every element of the index set contributes
  $[\Delta_{\theta r}]_+/q_{\theta r} = 0$, so $\kappa_r^\star = 0$.
  No penalty is required at component $r$. This case arises naturally
  for the ``most-conservative'' report---the type that imposes the
  strongest safe defaults---which no operator would profit from
  misrepresenting themselves as.

  \item \emph{Ties at the maximum.} If multiple $\theta \ne r$ attain
  the maximum $[\Delta_{\theta r}]_+/q_{\theta r} = \kappa_r^\star$,
  the binding IC constraint is multi-valued. At $\kappa_r = \kappa_r^\star$,
  pairwise IC~\eqref{eq:pairwise-ic} holds with equality for every
  tied $\theta$. The componentwise minimum schedule is unaffected; the
  tie merely means several misreporting types are simultaneously
  indifferent between truthful reporting and deviation.

  \item \emph{Vanishing detection probability.} Consider a sequence
  $(\Delta^{(n)}_{\theta r}, q^{(n)}_{\theta r})$ with $\Delta^{(n)}_{\theta r}$
  bounded away from zero and $q^{(n)}_{\theta r} \downarrow 0$. Then
  $\kappa_r^\star \to \infty$ along the sequence, and any \emph{finite}
  penalty schedule eventually fails IC. The theorem characterises the
  required penalty as a function of $(\Delta, q)$; rare detection
  forces the contract either to improve the verification protocol or
  to accept that residual misreporting risk is not financially
  bondable.

  \item \emph{Mixed-strategy deviations.} The Step~1 reduction shows
  that any mixed-strategy deviation by type $\theta$ is dominated by
  some pure-strategy deviation. Consequently, the IC characterisation
  via pairwise pure-strategy inequalities subsumes mixed-strategy
  deviations: weak dominance against pure misreports implies weak
  dominance against all randomised reporting strategies.
  \qedhere
\end{enumerate}
\end{proof}

\begin{remark}[Discrete-type analog of Myerson's binding-IC
characterisation]\label{rem:myerson}
The form
\[
\kappa_r^\star
\;=\;
\max_{\theta \ne r:\, q_{\theta r} > 0}
\frac{[\Delta_{\theta r}]_+}{q_{\theta r}}
\]
is the discrete-type analog of Myerson's binding-IC characterisation
in optimal auction design \citep{myerson1981}. In the continuous-type
Myerson setting, the binding IC constraint is between adjacent types via
the envelope formula
\(U(\theta) = U(\underline\theta) + \int_{\underline\theta}^{\theta}
 \partial_v u(v, x(v))\,dv\),
and the optimal payment rule is obtained by integrating the marginal
information rent. In our finite discrete setup there is no total order
on $\Theta$ and no envelope integral; instead, the binding constraint
is identified by the worst-case deviation
\[
\theta^\dagger
\;=\;
\arg\max_{\theta \ne r:\, q_{\theta r} > 0}
\frac{[\Delta_{\theta r}]_+}{q_{\theta r}},
\]
which is the type for whom misreporting as $r$ is most profitable
\emph{per unit of detection probability}. The penalty $\kappa_r^\star$
is the smallest value that deters this binding deviator; any smaller
value fails IC for $\theta^\dagger$ first.

The ratio $[\Delta_{\theta r}]_+/q_{\theta r}$ is itself an
actuarial-mechanism analog of Myerson's \emph{virtual valuation}:
deterrence cost scales not in raw misreporting gain $\Delta_{\theta r}$
but in gain per unit of expected sanction. A type with large gross gain
but high detection probability is cheap to deter; a type with moderate
gain but very low detection probability is expensive. The contract's
``most worrying'' misreporting type for report $r$ is therefore the
one with the highest ratio, not the one with the highest absolute
gain.

This characterisation links the contract's verification investment to
its premium structure: improving $q_{\theta r}$ via better verification
lowers the required $\kappa_r^\star$ proportionally, while leaving the
gross gain structure $\Delta_{\theta r}$ untouched. The trade-off
between verification cost and penalty magnitude is therefore an
explicit dial of contract design, captured by the ratio form rather
than by either quantity alone.
\end{remark}

\begin{remark}[IR and budget balance]\label{rem:ir-budget}
The IC penalty schedule $\kappa_r^\star$ is not automatically
individually rational. If the contract also requires interim IR,
\[
U^0_{\theta\theta} - P_\theta \;\ge\; \bar U_\theta,
\]
where $P_\theta$ is the premium charged to truthful type $\theta$ and
$\bar U_\theta$ is the operator's outside option, then premiums, toll
potentials, and penalties must be jointly chosen so that truthful
participation remains attractive. Budget balance can be imposed by
using expected penalty revenue only as an off-equilibrium term, or by
rebating surplus to truthful types without changing the IC inequalities
that define $\kappa_r^\star$.
\end{remark}

\begin{remark}[Existence vs.\ tightness]
A weaker form would assert only the existence of a sufficiently large
penalty that makes truthful reporting weakly dominant. The statement
above characterises $\kappa_r^\star$ as the componentwise minimal
penalty schedule under Assumption~\ref{ass:detect-profitable},
analogous to the marginal-revenue characterisation in Myerson optimal
auctions restricted to a discrete type space.
\end{remark}

\section{Worked Example: Refund Authority Under Common Control}\label{sec:example}

We illustrate Theorems~\ref{thm:boundary-agg}, \ref{thm:interface}, and
\ref{thm:type} on a single concrete instance combining all three attack
surfaces. The example is intentionally small enough to follow
numerically; its purpose is to make each theorem's hypotheses and
conclusions transparent before the composition discussion of
Section~\ref{sec:obligations}.

\paragraph{Setup.}
Consider an operator running a customer-service agent over three
merchant accounts under common control of a single corporate principal.
The agent has authority to issue customer refunds. To exploit
attack-class (c) of Definition~\ref{def:attacks}, the operator further
spawns separately-initialised sub-boundaries within those accounts---
new sessions, relabelled action categories, or proxy-agent
instances---producing a sub-boundary partition
\[
\Bset \;=\; \{B_1, B_2, \ldots, B_K\}, \qquad E^{B_k}_0 = 0 \text{ for all } k,
\]
all attributable to the same common-control principal in the sense of
Definition~\ref{def:cc}(iii). The contract uses a single declared
model identity from a binary type set
\[
\Theta \;=\; \{\theta_H,\, \theta_L\},
\]
representing a \emph{high-reliability} and a \emph{low-reliability}
model, respectively. Empirical priors on the interface-failure rate
and destructive-attempt rate satisfy $f_{\theta_H} < f_{\theta_L}$ and
$d_{\theta_H} < d_{\theta_L}$ (componentwise dominance, in the sense
of Definition~\ref{def:type-contract}(ii)).

The contract menu is fixed as follows:
\begin{itemize}[leftmargin=1.5em,itemsep=2pt]
  \item \emph{Action classes and sub-boundary potential.} A refund of
  cumulative amount $e$ on a single sub-boundary $B_k$ is priced as
  \texttt{monetary\_low} if $e \le \theta_{\mathrm{cap}} = \$100$ and as
  \texttt{monetary\_high} otherwise. The sub-boundary potential is
  \[
  \Phi_{B_k}(e) \;=\; \max\!\bigl(0,\; 0.5\,(e - \$100)\bigr),
  \]
  so a sub-boundary whose cumulative exposure stays at or below \$100
  incurs zero toll.
  \item \emph{Aggregate potential.} Under common-control aggregation,
  the contract uses
  \[
  \Phi_{B^\star}(E) \;=\; 0.5\,E,
  \]
  pricing the aggregate refund flow at a 50\% reserve rate regardless
  of how it is split across $\Bset$. The pair
  $(\Phi_{B_k}, \Phi_{B^\star})$ satisfies the super-additivity
  requirement of Definition~\ref{def:cc}(ii): for any
  $(e_1, \ldots, e_K) \in \R_+^K$,
  \[
  \Phi_{B^\star}\!\Bigl(\sum_k e_k\Bigr)
  \;=\; 0.5\sum_k e_k
  \;\ge\; \sum_k \max\!\bigl(0,\, 0.5(e_k - \$100)\bigr)
  \;=\; \sum_k \Phi_{B_k}(e_k),
  \]
  with strict inequality whenever $\sum_k e_k > 0$.
  \item \emph{Ambiguity reserve and interface adjudication.} We add an
  \emph{ambiguity-reserve extension} to the Paper~A contract for this
  instance: any unparseable or multiply-interpretable tool call is
  reserved at the supremum of the deterministic ambiguity set,
  $\sup_{a \in \mathcal{A}_{\mathrm{ambig}}} \mathrm{Reserve}(a)$, and
  treated as a priced interface failure (this clause is not in
  Paper~A; we propose it here as a Paper~C/AAI augmentation).
  Adjudication uses $\xi_{\mathrm{escalate}}$ with escalation fee
  $\kappa_{\mathrm{esc}}$ (Definition~\ref{def:if}).
  \item \emph{Identity reporting.} The operator declares
  $\hat\theta \in \Theta$ at inception; the runtime gate uses the
  corresponding $\Phi_{\hat\theta}$; the underwriter compares horizon-end
  observed $(\hat f, \hat d)$ to the prior region of $\hat\theta$ and
  charges $\kappa_{\hat\theta}$ if incompatibility is detected.
\end{itemize}
The operator considers a coordinated set of refund actions totalling
\$1{,}000 over the contract horizon, together with two additional
strategic levers (ambiguous proposals and identity misreporting). We
trace each Paper~C theorem in turn.

\paragraph{Theorem~\ref{thm:boundary-agg} in action: refund-splitting across sub-boundaries.}
The operator's first manoeuvre is attack-class (c) of
Definition~\ref{def:attacks}: spawn $K = 10$ separately-initialised
sub-boundaries $B_1, \ldots, B_{10}$ (one per fresh session,
relabelled action category, or proxy-agent instance) and route exactly
one \$100 refund increment to each. By construction $E^{B_k}_T = \$100$
and $E^{B_k}_0 = 0$ for $k = 1, \ldots, 10$, so the per-sub-boundary
toll is
\[
\Phi_{B_k}(\$100) - \Phi_{B_k}(0)
\;=\; \max(0,\, 0.5 \cdot 0) - 0
\;=\; \$0,
\qquad k = 1, \ldots, 10.
\]
A naive contract that applied sub-boundary settlement separately would
therefore charge \$0 on the entire \$1{,}000 flow---attack-class (c)
would succeed.

Under aggregate settlement (Definition~\ref{def:aggregate-settlement}),
the contract instead uses
\[
E^{B^\star}_T \;=\; \sum_{k=1}^{10} E^{B_k}_T \;=\; \$1{,}000,
\qquad
\mathrm{Payable}\bigl(E^{B^\star}_T\bigr)
\;=\; \Phi_{B^\star}(\$1{,}000) - \Phi_{B^\star}(0)
\;=\; \$500.
\]
Theorem~\ref{thm:boundary-agg} certifies that \$500 is the minimum
payable toll regardless of how the operator decomposes the \$1{,}000
flow across sub-boundaries, sessions, or proxy agents under the same
common-control principal: any partition $\{e_k\}_{k=1}^K$ with $\sum_k
e_k = \$1{,}000$ leaves payable toll equal to
$\Phi_{B^\star}(\$1{,}000) - \Phi_{B^\star}(0) = \$500$. The
within-boundary splitting closure of Paper~A's no-splitting theorem
(Paper~A, Theorem~11) prevents the same \$100 increment from being
further fragmented \emph{inside} any single $B_k$ (so the
\$100-per-sub-boundary partition above is already maximally split);
the aggregate-settlement clause then closes the cross-boundary surface
left open by Paper~A.

\paragraph{Theorem~\ref{thm:interface} in action: ambiguous tool-call proposal.}
The operator's second manoeuvre is attack-class (d): emit a tool call
whose declared action is ambiguous between two priced classes.
Concretely, suppose the agent proposes a structurally underspecified
JSON payload whose two candidate interpretations have distinct reserve
charges:
\begin{itemize}[leftmargin=1.5em,itemsep=2pt]
  \item $a_1 = \texttt{issue\_refund(\$400)}$:
  \texttt{monetary\_high}, reserve $\Phi_{B^\star}(\$400) = \$200$;
  \item $a_2 = \texttt{send\_email\_with\_invoice(\$400)}$:
  \texttt{external\_commit} with irrevocable third-party side-effect,
  reserve $\$400$.
\end{itemize}
Two distinct contract clauses jointly defuse this attack.

\textit{(i) Ambiguity-reserve clause (Paper~C extension to the
Paper~A contract).} The runtime gate cannot evaluate the priced action
without committing to an interpretation. The ambiguity-reserve
extension proposed in the setup above---not part of the Paper~A
contract proper but adjoined here---instructs the runtime to reserve
the supremum over the ambiguity set,
\[
\sup_{a \in \{a_1, a_2\}} \mathrm{Reserve}(a)
\;=\; \max(\$200,\, \$400) \;=\; \$400,
\]
and to treat the event as a priced interface failure.

\textit{(ii) Escalation-fee incentive (Theorem~\ref{thm:interface}).}
On top of the ambiguity reserve, adjudication rule
$\xi_{\mathrm{escalate}}$ charges the escalation fee
$\kappa_{\mathrm{esc}}$ per priced interface-failure event.
Theorem~\ref{thm:interface}(b) certifies that, whenever
\[
\kappa_{\mathrm{esc}}
\;\ge\;
\mu_c - V_\theta - C_{\mathrm{fail},\theta},
\]
the operator's expected utility is non-increasing in the
interface-failure rate $f_\theta$, so deliberately raising $f_\theta$
through ambiguous outputs becomes weakly unattractive.

Neither clause alone closes attack~(d): the ambiguity reserve without
the escalation fee leaves an incentive to emit failures whenever
$\mu_c > V_\theta + C_{\mathrm{fail},\theta}$ (Theorem~\ref{thm:interface}(a)),
and an escalation fee without a supremum reserve would let the operator
benefit by reserving the cheaper interpretation. The composite
``reserve-at-supremum + escalation-fee'' policy is what shuts attack~(d)
on this instance.

\paragraph{Theorem~\ref{thm:type} in action: model-identity misreporting.}
The operator's third manoeuvre is attack-class (e): declare
$\hat\theta = \theta_H$ at inception but deploy $\theta_L$ at runtime.
The contract menu attaches a more conservative aggregate potential to
the less reliable type. Suppose
\[
\Phi_{\theta_H,\,B^\star}(E) \;=\; 0.5\,E,
\qquad
\Phi_{\theta_L,\,B^\star}(E) \;=\; 0.7\,E,
\]
i.e.\ the $\theta_L$ contract attaches a 20-percentage-point reserve
surcharge to reflect higher empirical destructive-attempt rate. (The
safe-default menus differ as well; we suppress that dimension for
clarity.) Let $V$ denote the operator's gross business value from the
\$1{,}000 refund flow (constant across reports, since the deployed
model is $\theta_L$ in both scenarios). Pre-penalty utility is
$U^0_{\theta_L, r} = V - \Phi_{r, B^\star}(\$1{,}000)$, so on the same
flow:
\begin{itemize}[leftmargin=1.5em,itemsep=2pt]
  \item Misreport ($\theta_L$ deploys, contract treats as $\theta_H$):
  payable toll $\Phi_{\theta_H, B^\star}(\$1{,}000) = \$500$, hence
  $U^0_{\theta_L,\,\theta_H} = V - \$500$.
  \item Truthful report ($\theta_L$ deploys, contract treats as $\theta_L$):
  payable toll $\Phi_{\theta_L, B^\star}(\$1{,}000) = \$700$, hence
  $U^0_{\theta_L,\,\theta_L} = V - \$700$.
\end{itemize}
The gross misreporting gain (Section~\ref{sec:type} notation) is
\[
\Delta_{\theta_L,\,\theta_H}
\;=\; U^0_{\theta_L,\,\theta_H} - U^0_{\theta_L,\,\theta_L}
\;=\; (V - \$500) - (V - \$700)
\;=\; \$200,
\]
the toll savings from cheaper $\theta_H$-rated metering.
Suppose the verification protocol of
Definition~\ref{def:type-contract}(iii) compares horizon-end observed
$(\hat f, \hat d)$ against the $\theta_H$ prior region and detects
$\theta_L$ deployment with probability
\[
q_{\theta_L,\,\theta_H} \;=\; 0.3.
\]
By Theorem~\ref{thm:type}, the componentwise minimal IC penalty schedule
at report $r = \theta_H$ is
\[
\kappa_{\theta_H}^\star
\;=\;
\max_{\theta \ne \theta_H:\, q_{\theta,\theta_H} > 0}
\frac{[\Delta_{\theta,\theta_H}]_+}{q_{\theta,\theta_H}}
\;=\;
\frac{\$200}{0.3}
\;\approx\;
\$666.67.
\]
With $\kappa_{\theta_H} = \kappa_{\theta_H}^\star$, the misreport's
expected utility net of penalty is
\[
U_{\theta_L,\,\theta_H}
\;=\; U^0_{\theta_L,\,\theta_H} - q_{\theta_L,\,\theta_H}\,\kappa_{\theta_H}^\star
\;=\; (V - \$500) - 0.3 \times \$666.67
\;=\; V - \$700
\;=\; U^0_{\theta_L,\,\theta_L},
\]
i.e.\ misreport exactly breaks even with truthful reporting (weak
dominance); under any tie-breaker that favours truthfulness, truthful
reporting is strictly preferred. Any
$\kappa_{\theta_H} < \kappa_{\theta_H}^\star$ leaves a strictly
positive expected gain from misreport and therefore violates the
pairwise IC condition~\eqref{eq:pairwise-ic} for the pair
$(\theta_L, \theta_H)$.

\paragraph{What the three theorems show together.}
On the same \$1{,}000 refund flow the operator's authority is bounded
by three numerical constraints, each tied to a distinct attack class
of Definition~\ref{def:attacks}:
\begin{itemize}[leftmargin=1.5em,itemsep=2pt]
  \item \emph{Cross-boundary toll floor (Theorem~\ref{thm:boundary-agg}).}
  No partition of the \$1{,}000 flow across operator-spawned
  sub-boundaries under the same common-control principal can push the
  payable toll below \$500. Attack-class (c) closed by Paper~C.
  \item \emph{Composite interface closure
  (Paper~C ambiguity-reserve extension + Theorem~\ref{thm:interface}).}
  The ambiguity-reserve extension proposed by Paper~C reserves
  ambiguous tool calls at the supremum interpretation (here \$400),
  and the escalation fee $\kappa_{\mathrm{esc}}$ via
  Theorem~\ref{thm:interface}(b) makes inflating $f_\theta$ weakly
  unattractive. Attack-class (d) closed by the conjunction; neither
  clause alone suffices, as shown in the paragraph above.
  \item \emph{Identity-reporting penalty floor (Theorem~\ref{thm:type}).}
  Declaring $\theta_H$ while deploying $\theta_L$ is weakly
  utility-dominated whenever $\kappa_{\theta_H} \ge \$666.67$.
  Attack-class (e) closed by Paper~C.
\end{itemize}
Attack-classes (a) and (b) of Definition~\ref{def:attacks} were already
closed by Paper~A: (a) by Paper~A's no-splitting theorem
(Paper~A, Theorem~11) and (b) by the minimal-authority safe-default
definition (Paper~A, Definition~2) that fixes $a^0$ ex ante and
contractually. On this instance, the five-attack space is therefore
jointly closed by the Paper~A contract (a, b) together with four
Paper~C clauses: aggregate settlement (for c), the ambiguity-reserve
extension and escalation adjudication (jointly for d), and the
type-reporting penalty schedule $\kappa_r^\star$ (for e). The
worked example is presented in the \emph{premium-free baseline}
($P_\theta \equiv 0$ in clause (C6)); the premium-aware bound
$\widetilde{\kappa}_r^\star$ used by the general composition theorem
(Sections~\ref{sec:composition}--\ref{sec:ir-bb}) reduces to
$\kappa_r^\star$ in this baseline (cf.\ clause (C5) of
Definition~\ref{def:composition}).

The example is deliberately stylised: the boundary potentials are
linear, the type space is binary, and the verification protocol's
detection probability is treated as a single scalar.
Section~\ref{sec:composition} promotes the worked-instance observation
to a general statement: under any composition contract
(Definition~\ref{def:composition}) the five-attack space is jointly
closed (Theorem~\ref{thm:composition}), with general $\Phi$,
multi-type $\Theta$, and verification-protocol-dependent $q_{\theta r}$.

\section{Composition: Joint Incentive Compatibility}\label{sec:composition}

Theorems~\ref{thm:boundary-agg}, \ref{thm:interface}, and \ref{thm:type}
each close one attack class of Definition~\ref{def:attacks} in
isolation. Section~\ref{sec:example} demonstrated that the three
closures act simultaneously on a single worked instance. This section
formalises that observation. We bundle the Paper~A contract with the
four Paper~C clauses introduced above into a single \emph{composition
contract}, identify a uniformity property of the per-clause closures
(Lemma~\ref{lem:report-indep}), and prove the joint
incentive-compatibility result (Theorem~\ref{thm:composition}) over
the full five-attack space. One auxiliary hypothesis---operator
individual rationality---is treated as a black box and deferred to
Section~\ref{sec:obligations}.

\begin{definition}[Composition contract]\label{def:composition}
A \emph{composition contract} consists of the Paper~A contract together
with four Paper~C clauses:
\begin{enumerate}[label=\textup{(C\arabic*)},leftmargin=2.1em,itemsep=2pt]
  \item \textbf{Paper~A clauses.} The minimal-authority safe default
  $a^0$ fixed ex ante (Paper~A, Definition~2); the no-splitting
  potential structure (Paper~A, Theorem~11); and the high-probability
  toll envelope (Paper~A, Assumption~16) with the associated runtime
  gate.
  \item \textbf{Aggregate settlement.} A common-control aggregation
  map $\pi$ (Definition~\ref{def:cc}) and the aggregate-settlement
  rule (Definition~\ref{def:aggregate-settlement}) such that the
  operator's payable toll over $\Bset$ equals
  $\Phi_{B^\star}(E^{B^\star}_T) - \Phi_{B^\star}(E^{B^\star}_0)$.
  \item \textbf{Ambiguity-reserve extension.} A Paper~C augmentation
  of the runtime gate that reserves every unparseable or
  multiply-interpretable tool call at the supremum of the
  deterministic ambiguity set,
  $\sup_{a \in \mathcal{A}_{\mathrm{ambig}}} \mathrm{Reserve}(a)$,
  and treats the event as a priced interface failure (this clause is
  not in Paper~A; we propose it here as a Paper~C/AAI augmentation).
  \item \textbf{Escalation adjudication.} The adjudication rule
  $\xi_{\mathrm{escalate}}$ (Definition~\ref{def:if}) with escalation
  fee
  \[
  \kappa_{\mathrm{esc}}
  \;\ge\;
  \sup_{\theta \in \Theta,\, r \in \Theta}\bigl(\mu_{c, \theta, r} - V_\theta - C_{\mathrm{fail},\theta}\bigr),
  \]
  where
  $\mu_{c, \theta, r} = \E_\theta[c_t^+ \mid \text{contract evaluated under report } r]$
  is the expected successful-action positive toll when the deployed
  type is $\theta$ (which determines the intended-action distribution
  as in Theorem~\ref{thm:interface}) and the contract is evaluated
  under the report-$r$ toll potential. The bound therefore makes the
  Theorem~\ref{thm:interface}(b) sufficient condition hold uniformly
  over the pair (deployed type, declared report) $\in \Theta \times \Theta$
  without assuming type-invariance of the successful-action toll
  distribution.
  \item \textbf{Type-reporting penalty schedule.} A finite declarable
  type set $\Theta$ and verification protocol as in
  Definition~\ref{def:type-contract}, together with a penalty schedule
  $\{\kappa_r\}_{r \in \Theta}$ satisfying the \emph{premium-aware}
  bound
  \[
  \kappa_r \;\ge\; \widetilde{\kappa}_r^\star
  \;:=\; \max_{\theta \ne r:\, q_{\theta r} > 0}
  \frac{[\widetilde{\Delta}_{\theta r}]_+}{q_{\theta r}}
  \qquad \forall\, r \in \Theta,
  \]
  where the \emph{net misreport gain} accounts for the premium
  schedule of clause (C6) below,
  \[
  \widetilde{\Delta}_{\theta r}
  \;:=\;
  (U^0_{\theta r} - P_r) - (U^0_{\theta\theta} - P_\theta)
  \;=\;
  \Delta_{\theta r} + P_\theta - P_r,
  \]
  with $\Delta_{\theta r}$ and $q_{\theta r}$ as in
  Section~\ref{sec:type} and Lemma~\ref{lem:premium-ic} verifying that
  $\widetilde{\kappa}_r^\star$ is the relevant IC bound. In the
  premium-free baseline ($P_\theta \equiv 0$),
  $\widetilde{\Delta}_{\theta r} = \Delta_{\theta r}$ and
  $\widetilde{\kappa}_r^\star = \kappa_r^\star$, recovering
  Theorem~\ref{thm:type}.
  \item \textbf{Premium schedule.} A schedule
  $\{P_\theta\}_{\theta \in \Theta}$ of ex-ante premiums collected at
  $t = 0$ from operators declaring type $\theta$. The choice of
  $\{P_\theta\}$ is determined by individual rationality and
  budget balance (Section~\ref{sec:ir-bb},
  Theorem~\ref{thm:ir-bb}); for Theorem~\ref{thm:composition} it is
  treated as a contract parameter satisfying
  Assumption~\ref{ass:ir}.
\end{enumerate}
\end{definition}

\begin{assumption}[Operator individual rationality]\label{ass:ir}
For every operator type $\theta \in \Theta$ with premium $P_\theta$
and outside-option utility $\bar U_\theta$, the contract terms satisfy
\[
U^0_{\theta\theta} - P_\theta \;\ge\; \bar U_\theta.
\]
Constructing $(P_\theta, \Phi_{r,\cdot}, \kappa_r)$ that jointly
implement this inequality and budget balance is deferred to
Section~\ref{sec:obligations}; here we treat it as a hypothesis (see
Remark~\ref{rem:ir-budget}).
\end{assumption}

\begin{assumption}[Static-deployment scope]\label{ass:static-deploy}
For Theorem~\ref{thm:composition}, the operator's type space $\Theta$
is the finite model-identity projection of the richer operator type
space $\Oset$ of Section~\ref{sec:operator}: a type
$\theta \in \Theta$ identifies the operator with a single deployed
model identity over the entire underwriting horizon
$[0, T]$. All non-model coordinates of the Section~\ref{sec:operator}
type tuple
$(\M_\theta, a^0_\theta, B_\theta, A_\theta^+)$---the safe-default
menu choice, the boundary partition, and the maximum authority
set---are fixed by the contract menu of Definition~\ref{def:type-contract}
at $t = 0$ and are not separately strategic post-inception. Any
runtime switch of deployed model identity is treated as a fresh
report event and triggers the verification protocol of
Definition~\ref{def:type-contract}(iii) on the new identity.
\end{assumption}

\begin{assumption}[Premium-aware detectability]\label{ass:detect-net}
For every pair $(\theta, r) \in \Theta \times \Theta$ with $r \neq \theta$, if the net misreport gain
\[
\widetilde{\Delta}_{\theta r}
\;:=\;
(U^0_{\theta r} - P_r) - (U^0_{\theta\theta} - P_\theta)
\;=\;
\Delta_{\theta r} + P_\theta - P_r
\]
is strictly positive, then the verification protocol's detection
probability satisfies $q_{\theta r} > 0$.
\end{assumption}

Assumption~\ref{ass:detect-net} strengthens
Assumption~\ref{ass:detect-profitable} (Section~\ref{sec:type}): the
latter requires detectability of \emph{gross}-profitable misreports
($\Delta_{\theta r} > 0$), whereas
Assumption~\ref{ass:detect-net} additionally requires detectability of
\emph{net}-profitable misreports motivated by a premium discount
($P_\theta - P_r$ pushing $\widetilde{\Delta}_{\theta r}$ above zero
even when $\Delta_{\theta r} \le 0$). In the premium-free baseline
($P_\theta \equiv 0$) the two coincide.

\paragraph{Reduction strategy.}
The composition argument rests on a uniformity observation: clauses
(C1)--(C4) close attack classes (a)--(d) regardless of which type
$r \in \Theta$ the operator reports. This collapses the operator's
optimisation to a type-reporting problem, to which the premium-aware
type-reporting analysis of Lemma~\ref{lem:premium-ic} (introduced in
Section~\ref{sec:ir-bb}) applies directly. The next lemma formalises
the uniformity.

\begin{lemma}[Report-independent closure of attack classes (a)--(d)]\label{lem:report-indep}
Fix a composition contract (Definition~\ref{def:composition}). For
every operator true type $\theta \in \Theta$ and every declared report
$r \in \Theta$:
\begin{enumerate}[label=\textup{(\alph*)},leftmargin=1.8em,itemsep=2pt]
  \item \emph{Within-boundary splitting.} The cumulative toll on any
  single sub-boundary $B_k$ depends on the executed sequence only via
  the potential identity
  $\Phi_{r, B_k}(E^{B_k}_T) - \Phi_{r, B_k}(E^{B_k}_0)$; no decomposition
  into smaller priced increments inside $B_k$ reduces this quantity.
  \item \emph{Post-toll safe-default selection.} The safe-default
  mapping $a^0_r$ is contractually fixed at $t = 0$; no realisation of
  the toll process changes it.
  \item \emph{Cross-boundary re-routing.} No partition
  $\{e_k\}_{k=1}^K$ of a fixed economic increment $E^\star$ across
  $\Bset$ reduces payable toll below
  $\Phi_{r, B^\star}(E^\star) - \Phi_{r, B^\star}(0)$.
  \item \emph{Interface-compliance gaming.} For every realisation of
  the operator's interface-failure rate $f_\theta$, the operator's
  expected utility under report $r$ is non-increasing in $f_\theta$.
\end{enumerate}
Each of (a)--(d) holds for all pairs $(\theta, r) \in \Theta \times \Theta$;
the closures are uniform in the operator's reporting strategy.
\end{lemma}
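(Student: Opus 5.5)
The plan is to fix an arbitrary pair $(\theta, r) \in \Theta \times \Theta$ and verify (a)--(d) one at a time. Each item will be reduced to an earlier result, instantiated with the report-$r$ contract objects $(a^0_r, \Phi_{r,\cdot})$. These objects are fixed at $t=0$ by the menu of Definition~\ref{def:type-contract}. Assumption~\ref{ass:static-deploy} guarantees they do not change over $[0,T]$. Consequently the runtime faced by the operator under report $r$ is a Paper~A contract with fixed primitives, and every earlier closure argument applies verbatim. Uniformity in $r$ is then automatic, because no step uses any property of the report beyond the structural hypotheses that the menu imposes on every $r$.

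\emph{Items (a) and (b).} For (a), I invoke Paper~A's no-splitting theorem (Paper~A, Theorem~11), included in clause (C1), with potential $\Phi_{r,B_k}$. The within-$B_k$ toll telescopes to $\Phi_{r,B_k}(E^{B_k}_T)-\Phi_{r,B_k}(E^{B_k}_0)$, so any finer decomposition inside $B_k$ yields the same sum. The only point to check is that the menu's potentials $\Phi_{r,\cdot}$ meet the structural hypotheses of that theorem for every $r$. I will take this as part of what (C1) means when applied to each menu entry. Item (b) is immediate from Paper~A, Definition~2 together with Definition~\ref{def:type-contract}: $a^0_r$ is declared at inception. No toll realisation is an input to it, and by Assumption~\ref{ass:static-deploy} a change of report is a fresh report event, not a post-toll reselection.

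\emph{Item (c).} I apply Theorem~\ref{thm:boundary-agg} under clause (C2), with $\Phi_{B^\star}$ replaced by $\Phi_{r,B^\star}$ and $E^{B^\star}_0=0$ at a fresh period. Steps~3--5 of that proof use only Definition~\ref{def:cc}(i),(iii) and the settlement rule. None of these depends on the functional form of the potential. Hence the payable toll equals $\Phi_{r,B^\star}(E^\star)-\Phi_{r,B^\star}(0)$ for every decomposition $\{e_k\}$, and in particular is not below it.

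\emph{Item (d), the main obstacle.} I apply Theorem~\ref{thm:interface}(b), but the one-step model must be re-read with the pair-specific toll $\mu_{c,\theta,r}$ in place of $\mu_c$. The intended-action distribution is fixed by the deployed type $\theta$, while the pricing is fixed by the report $r$. Under $\xi_{\mathrm{escalate}}$, the slope of $U^{\mathrm{esc}}_\theta$ in $f$ is $\mu_{c,\theta,r}-V_\theta-C_{\mathrm{fail},\theta}-\kappa_{\mathrm{esc}}$. Clause (C4) bounds $\kappa_{\mathrm{esc}}$ below by the supremum of $\mu_{c,\theta,r}-V_\theta-C_{\mathrm{fail},\theta}$ over $\Theta\times\Theta$. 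The slope is therefore nonpositive for this pair, and the affine argument of Step~3 gives monotonicity.

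The delicate part is the interaction with clause (C3). An interface failure now also incurs the supremum reserve, not zero toll, and the model of Theorem~\ref{thm:interface} must be adapted accordingly. I would handle this by adding a nonnegative per-failure reserve charge to the failure branch. This term only decreases the slope further, in the \emph{a fortiori} manner of Step~5(v). I would also note that the reserve and the fee $\kappa_{\mathrm{esc}}$ do not depend on $f$. Finally, I would point out that the premium $P_r$ and the penalty term $q_{\theta r}\kappa_r$ enter additively and independently of $f$, so they do not affect the slope. This completes (d) and the uniformity claim.
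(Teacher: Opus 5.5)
Your proposal is correct and follows the paper's proof essentially item by item: (a) via Paper~A's no-splitting theorem applied parametrically to $\Phi_{r,B_k}$, (b) via ex-ante fixing of $a^0_r$ from the menu, (c) via Steps~1--5 of Theorem~\ref{thm:boundary-agg} with $\Phi_{r,B^\star}$, and (d) via Theorem~\ref{thm:interface}(b) at the pair-specific toll $\mu_{c,\theta,r}$ using the supremum in (C4). Your explicit treatment of the (C3) reserve as an extra nonnegative per-failure charge, which only lowers the slope, spells out a step the paper leaves implicit. Your appeal to Assumption~\ref{ass:static-deploy} is unnecessary, since the lemma does not assume it and Definition~\ref{def:type-contract} already fixes $(a^0_r,\Phi_{r,\cdot})$ at inception.
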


\begin{proof}[Proof of Lemma~\ref{lem:report-indep}]
(a) Paper~A's no-splitting theorem (Paper~A, Theorem~11) shows that
within a single underwriting boundary $B$ with cumulative exposure
$E^B_t$ and potential $\Phi_B$, the executed-action toll satisfies
$\sum_t c_t^+ = \Phi_B(E^B_T) - \Phi_B(E^B_0)$ for any decomposition of
the priced actions into increments. Composition contract clause (C1)
adopts this potential structure on each sub-boundary $B_k$ of the
operator's partition $\Bset$. Under (C5) the contract instantiates the
sub-boundary potential as $\Phi_{r, B_k}$ for the declared report $r$,
but Paper~A's argument is parametric in the potential and does not
use which $r$ was reported. Closure of (a) therefore holds uniformly in
$(\theta, r)$.

(b) Paper~A's Definition~2 fixes the safe-default mapping $a^0$ as a
contractual object known at $t = 0$. Composition contract clause~(C5)
combined with Definition~\ref{def:type-contract}(ii) selects
$a^0_r$ from the declared menu at $t = 0$. Once selected, $a^0_r$ is
unobservable as a function of subsequent toll realisations; no
operator deviation after $t = 0$ can change the safe default. The
report $r$ enters only as a menu-selection index at inception and does
not affect the closure: post-$t=0$ safe-default reselection is
unavailable for every $r$.

(c) Theorem~\ref{thm:boundary-agg} (steps~1--5 of its proof)
demonstrates that, given (C2) (common-control aggregation $\pi$ and
aggregate settlement), the payable toll on any partition
$\{e_k\}_{k=1}^K$ with $\sum_k e_k = E^\star$ equals
$\Phi_{B^\star}(E^\star) - \Phi_{B^\star}(0)$. The argument is
parametric in $\Phi_{B^\star}$: under composition contract the
aggregate potential is $\Phi_{r, B^\star}$, but neither the
common-control observability requirement (Definition~\ref{def:cc}(iii))
nor the aggregate-settlement identity uses the report $r$ in a
strategic role. Closure of (c) holds uniformly in $(\theta, r)$.

(d) Composition contract clause~(C3) reserves any ambiguous tool call
at $\sup_{a \in \mathcal{A}_{\mathrm{ambig}}} \mathrm{Reserve}(a)$;
this supremum is over the contract's deterministic parsing
specification and is independent of $(\theta, r)$. Composition contract
clause~(C4) imposes the escalation fee
$\kappa_{\mathrm{esc}} \ge \sup_{\theta', r'}(\mu_{c, \theta', r'} - V_{\theta'} - C_{\mathrm{fail}, \theta'})$,
which by construction dominates the Theorem~\ref{thm:interface}(b)
sufficient condition for every (deployed type, report) pair
$(\theta, r) \in \Theta \times \Theta$: at any such pair, the
type-and-report-specific expected successful-action toll
$\mu_{c, \theta, r}$ enters the (C4) supremum, so
$\kappa_{\mathrm{esc}} \ge \mu_{c, \theta, r} - V_\theta - C_{\mathrm{fail}, \theta}$
holds. Applying Theorem~\ref{thm:interface}(b) at $(\theta, r)$,
the operator's expected utility is non-increasing in the operator-induced
$f_\theta$ regardless of which report was declared. Closure of (d)
holds uniformly in $(\theta, r)$.
\end{proof}

\paragraph{From reduction to joint IC.}
With (a)--(d) uniformly closed across reports, the operator's optimal
execution given any report collapses to contract-following choices;
under Assumption~\ref{ass:static-deploy} the deployed type is not a
strategy variable, so the only remaining strategic lever is the report
itself. The premium-aware type-reporting penalty schedule (C5) and
Lemma~\ref{lem:premium-ic} then close~(e).

\begin{theorem}[Joint IC of Paper~A + Paper~C]\label{thm:composition}
Fix a composition contract (Definition~\ref{def:composition}) and
suppose Assumption~\ref{ass:static-deploy} (static-deployment scope),
Assumption~\ref{ass:ir} (operator IR), and
Assumption~\ref{ass:detect-net} (premium-aware detectability) hold,
with clause (C5) instantiated at the premium-aware bound
$\kappa_r \ge \widetilde{\kappa}_r^\star$ for every $r \in \Theta$.
Then \emph{truthful contract-following}---declaring
the operator's true type $\hat\theta = \theta$, deploying the model
associated with $\theta$, following the safe-default mapping $a^0_\theta$
in the runtime gate, executing only priced actions that do not split,
re-route, or generate interface failures, and producing well-formed
tool calls---is weakly dominant for the operator over the entire
five-attack space of Definition~\ref{def:attacks}.
\end{theorem}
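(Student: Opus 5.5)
The plan is a two-stage reduction. First, Lemma~\ref{lem:report-indep} shows that once a report is fixed, the runtime surfaces (a)--(d) are closed. Second, Lemma~\ref{lem:premium-ic} handles the remaining choice of report under clause~(C5). Formally, I would write a general operator deviation as a pair $(r,s)$. Here $r \in \Theta$ is the declared report. The runtime plan $s$ fixes the execution behaviour: how priced actions are split inside a boundary, any attempt to reselect $a^0$ after $t=0$, how increments are routed across $\Bset$, and the interface-failure behaviour, i.e.\ $f$. By Assumption~\ref{ass:static-deploy}, the deployed model is the true type $\theta$ throughout $[0,T]$. Any runtime model switch counts as a fresh report and is caught by the verification protocol, so it adds no strategic lever beyond $r$. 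The net utility of $(r,s)$ for type $\theta$ is then the business term, minus the toll (computed with $\Phi_{r,\cdot}$), minus $P_r$, minus the expected escalation fees, minus $q_{\theta r}\kappa_r$.

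\emph{Step 1 (runtime optimisation given $r$).} Fix $\theta$ and $r$, and compare any plan $s$ with the contract-following plan $s^\circ$. I would handle the four runtime levers one at a time, relying on the fact that toll, escalation fees and the premium enter additively. Splitting inside a boundary leaves the toll equal to $\Phi_{r,B_k}(E_T)-\Phi_{r,B_k}(E_0)$, by Lemma~\ref{lem:report-indep}(a). Reselecting $a^0$ after inception is simply not available, by Lemma~\ref{lem:report-indep}(b). Re-routing across boundaries leaves the payable toll equal to the aggregate charge, by Lemma~\ref{lem:report-indep}(c) and the equality in Theorem~\ref{thm:boundary-agg}. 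Raising $f$ cannot increase utility, because utility is non-increasing in $f$ by Lemma~\ref{lem:report-indep}(d). Ambiguous calls are reserved at the supremum by (C3), so they also cannot undercut the well-formed charge. Combining these gives $U_{\theta}(r,s)\le U_{\theta}(r,s^\circ)$ for every $r$.

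\emph{Step 2 (report choice).} With $s^\circ$ fixed, $U_\theta(r,s^\circ)=U^0_{\theta r}-P_r-q_{\theta r}\kappa_r$ for $r\neq\theta$. Truthful reporting gives $U^0_{\theta\theta}-P_\theta$. Lemma~\ref{lem:premium-ic}, which is Theorem~\ref{thm:type} with $\Delta$ replaced by $\widetilde\Delta$, applies under Assumption~\ref{ass:detect-net} because $\kappa_r\ge\widetilde\kappa_r^\star$. It yields $U^0_{\theta\theta}-P_\theta\ge U_\theta(r,s^\circ)$. Chaining this with Step~1 gives $U_\theta(\theta,s^\circ)\ge U_\theta(r,s)$ for every $(r,s)$, including mixed deviations, since expected utility is linear in the randomisation as in Step~1 of Theorem~\ref{thm:type}. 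Assumption~\ref{ass:ir} plays only one role here: it guarantees that participating truthfully beats the outside option. That makes the dominant strategy meaningful as a participation choice; it is not needed for the dominance inequality itself.

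The main obstacle is Step~1's separability claim. The four levers interact: a deviation can change the exposure path and the failure rate at the same time, and the escalation clause (C4) is stated per report through $\mu_{c,\theta,r}$. I would first argue that the utility decomposes additively into a boundary-toll term, which depends on the plan only through the aggregate increment $E^\star$ and $\Phi_{r,B^\star}$, and an interface term that is affine in $f$ with slope $\mu_{c,\theta,r}-V_\theta-C_{\mathrm{fail},\theta}-\kappa_{\mathrm{esc}}\le 0$. Once that decomposition holds, each lever can be set to its contract-following value independently. If the economic increment $E^\star$ itself depends on $f$, I would condition on the successful steps so that the toll is charged only on executed actions. This matches the way $\mu_c$ is defined in Theorem~\ref{thm:interface}.
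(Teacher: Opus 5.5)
Your proposal is correct and follows the paper's proof essentially step for step. Like the paper, you split a strategy into a report and an execution policy, use Lemma~\ref{lem:report-indep} to collapse execution onto contract-following for each fixed report (the paper's $U^0_{\theta,r}$ is exactly this best-response utility), then apply Lemma~\ref{lem:premium-ic} under Assumption~\ref{ass:detect-net} and use Assumption~\ref{ass:ir} only for participation. Your additive decomposition into an aggregate-toll term and an interface term affine in $f$ is extra care for the lever-interaction issue, which the paper's Step~2 takes for granted; it is not a different route.
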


\begin{proof}[Proof of Theorem~\ref{thm:composition}]
The proof has four steps. Step~1 fixes the operator's strategy space
and writes utility in a form that separates report from execution.
Step~2 applies Lemma~\ref{lem:report-indep} to collapse the
execution-side optimisation conditional on report. Step~3 reduces
the remaining problem to premium-aware type reporting. Step~4 applies
Lemma~\ref{lem:premium-ic}. Throughout, $\theta \in \Theta$ denotes the
operator's true type. Under
Assumption~\ref{ass:static-deploy}, $\theta$ is identified with the
deployed model identity over the entire underwriting horizon and the
non-model coordinates of the Section~\ref{sec:operator} type tuple
are fixed by contract menu at $t = 0$; the operator's strategic
choices reduce to what to report and how to execute. Attack-class (e)
of Definition~\ref{def:attacks} is then the choice $r \neq \theta$
at contract inception, while attack-classes (a)--(d) are choices of
execution policy.

\medskip
\noindent\emph{Step 1 (Strategy decomposition and utility).}
With the deployed type identified with $\theta$, an operator strategy
is a pair $(\rho, \sigma_e)$ where $\rho \in \Delta(\Theta)$ is a
report distribution and $\sigma_e$ is an execution policy over priced
actions and interface behaviour (including the operator's induced
interface-failure rate $f_\theta$, which $\sigma_e$ can raise via
ambiguous tool-call construction). Writing
$\mathrm{TollNet}_{\theta, r}(\sigma_e)$ for the cumulative payable
toll under report $r$, deployed type $\theta$, and execution
$\sigma_e$ (including ambiguity and escalation charges under
(C3)--(C4)), $U^{\mathrm{biz}}_\theta(\sigma_e \mid r)$ for the gross
deployment value, and $P_r$ for the ex-ante premium charged by
clause~(C6) under report $r$, the operator's expected utility is
\[
U_\theta(\rho, \sigma_e)
\;=\;
\E_{r \sim \rho}\!\left[\;
U^{\mathrm{biz}}_\theta(\sigma_e \mid r)
- \mathrm{TollNet}_{\theta, r}(\sigma_e)
- P_r
- q_{\theta, r}\,\kappa_r
\;\right],
\]
where $q_{\theta, r}\,\kappa_r$ is the expected verification penalty
under report $r$ ($q_{\theta\theta} = 0$ by convention). Deployment is
not a strategy variable: the operator's type $\theta$ is fixed, and
attack-class (e) (model-identity misreporting) is the gap between
$\theta$ and $r$, not a deployment deviation. The ex-ante premium
$P_r$ appears in the utility because it is charged on the operator's
declared type, not their deployed type.

\medskip
\noindent\emph{Step 2 (Collapse of $\sigma_e$ conditional on report).}
Fix any report $r$ and the operator's true type $\theta$.
Lemma~\ref{lem:report-indep} shows that within the
conditional-on-$(\theta, r)$ problem, each of attack classes (a)--(d)
is weakly utility-non-positive: an execution $\sigma_e$ that splits
priced actions inside any $B_k$, attempts cross-boundary re-routing of
an aggregate exposure increment, or raises the interface-failure
rate $f_\theta$ via ambiguous tool calls can be replaced by a
contract-following execution $\sigma_e^\star(\theta, r)$ without
decreasing expected utility. Define the \emph{report-conditional
best-response utility}
\[
U^0_{\theta, r}
\;:=\;
\max_{\sigma_e}\,
\E_\theta\!\left[
U^{\mathrm{biz}}_\theta(\sigma_e \mid r)
- \mathrm{TollNet}_{\theta, r}(\sigma_e)
\right],
\]
where the maximum is attained on $\sigma_e^\star(\theta, r)$ by the
lemma. The maximum is over execution only; the deployed type
$\theta$ is held fixed. This $U^0_{\theta, r}$ coincides with the
construct of Section~\ref{sec:type}: the operator's pre-penalty
utility under report $r$ and deployed type $\theta$, with gaming
routes (a)--(d) closed. The verification posterior
$q_{\theta, r}$ and gross gain $\Delta_{\theta, r}$ of
Section~\ref{sec:type} are therefore defined relative to the same
post-(a)--(d) utility used here, so substituting $U^0_{\theta, r}$ into
the type-reporting analysis is consistent.

\medskip
\noindent\emph{Step 3 (Reduction to premium-aware type-reporting).}
Substituting $U^0_{\theta, r}$ into Step~1's utility expression, the
operator's problem reduces to
\[
\max_\rho\, \E_{r \sim \rho}\!\left[\,U^0_{\theta, r} - P_r - q_{\theta, r}\,\kappa_r\,\right]
\;=\;
\max_\rho\, \E_{r \sim \rho}\!\left[\,U_{\theta, r}\,\right],
\]
with $U_{\theta, r} = (U^0_{\theta, r} - P_r) - q_{\theta, r}\,\kappa_r$.
By linearity in $\rho$, the maximum is attained on some deterministic
report; the problem is exactly the \emph{premium-aware}
type-reporting problem of
Lemma~\ref{lem:premium-ic} (Section~\ref{sec:ir-bb}).

\medskip
\noindent\emph{Step 4 (Apply Lemma~\ref{lem:premium-ic}).}
Composition contract clause~(C5) imposes
$\kappa_r \ge \widetilde{\kappa}_r^\star$ for every $r \in \Theta$
(the premium-aware bound). Under
Assumption~\ref{ass:detect-net}, Lemma~\ref{lem:premium-ic} then
gives $U_{\theta, \theta} \ge U_{\theta, r}$ for every
$\theta \in \Theta$ and $r \in \Theta$. Truthful reporting
$\rho_{\rm tru}(\theta) = \delta_\theta$ is therefore weakly dominant.
In the premium-free baseline ($P_\theta \equiv 0$), this reduces to
Theorem~\ref{thm:type} applied directly under
Assumption~\ref{ass:detect-profitable}, since
$\widetilde{\kappa}_r^\star = \kappa_r^\star$ and
$\widetilde{\Delta}_{\theta r} = \Delta_{\theta r}$.

\medskip
\noindent\emph{Combining.}
Step~4 selects $r = \theta$, closing attack-class (e). Conditional on
this report, Step~2's lemma collapses $\sigma_e$ onto contract-following
$\sigma_e^\star(\theta, \theta)$, which respects $a^0_\theta$ on covered
events and avoids splitting, re-routing, and ambiguous tool calls
(closing (a)--(d)). The composition of these choices---the operator's
deployed type $\theta$ together with truthful report and contract-following
execution---is truthful contract-following;
Assumption~\ref{ass:ir} guarantees that this strategy weakly dominates
the outside option. The five-attack space of
Definition~\ref{def:attacks} is therefore jointly closed, and truthful
contract-following is weakly dominant.
\end{proof}

\begin{remark}[Coverage of the strategy space]\label{rem:coverage}
Theorem~\ref{thm:composition} is a claim relative to the five-attack
space of Definition~\ref{def:attacks}. A strategy that fits none of
(a)--(e)---for example, adversarial calibration of the toll envelope
(Section~\ref{sec:obligations}, item~2)---is outside the theorem's
scope. Joint IC over a finer attack characterisation requires the
corresponding contract clause and is not asserted here.
\end{remark}

\begin{remark}[IR and budget balance as deferred work]\label{rem:ir-deferred}
Assumption~\ref{ass:ir} is a hypothesis, not a conclusion. A
constructive lemma giving conditions on premiums $P_\theta$, toll
potentials, and the penalty schedule $\{\kappa_r\}_{r \in \Theta}$ under
which IR holds jointly with budget balance---e.g.\ by rebating expected
penalty surplus
$\sum_\theta \Pr[\hat\theta = \theta]\,\E[\,q\,\kappa\,]$ to truthful
types---is the next refinement of Theorem~\ref{thm:composition} and is
the first item of Section~\ref{sec:obligations}. The qualitative
discussion in Remark~\ref{rem:ir-budget} previews the relationship
between IR and the IC inequalities defining $\kappa_r^\star$.
\end{remark}

\begin{remark}[Composition versus welfare]\label{rem:welfare}
Theorem~\ref{thm:composition} establishes joint IC for one specific
parameterisation: conservative toll potentials, $\xi_{\mathrm{escalate}}$
in preference to $\xi_{\mathrm{safe}}$, and componentwise minimal
premium-aware penalties $\widetilde{\kappa}_r^\star$. An alternative
mechanism designer optimising ex-ante operator welfare under the same
gaming constraints might choose different parameters (e.g.\ a smaller
$\kappa_{\mathrm{esc}}$ when business value $V_\theta$ is uniformly
high). Welfare characterisation under the joint IC constraint is left
for future work.
\end{remark}

\section{Individual Rationality and Budget Balance}\label{sec:ir-bb}

Theorem~\ref{thm:composition} treats operator individual rationality
(Assumption~\ref{ass:ir}) as a hypothesis. This section shows that
the hypothesis can be discharged constructively: under mild positivity
conditions, there is an explicit two-parameter family of premium
schedules $\{P_\theta\}_{\theta \in \Theta}$ that jointly satisfies IR
and weak budget balance at the Theorem~\ref{thm:composition}
equilibrium, parametrised by how aggressively the contract extracts
operator surplus (load) versus rebates expected toll surplus
(rebate). Three natural special cases give exact budget balance,
pure-load extraction, and a mixed contract; combining the
constructive schedule with Theorem~\ref{thm:composition} yields a
single statement (Theorem~\ref{thm:full-composition}) of joint
incentive compatibility, individual rationality, and weak budget
balance.

\begin{definition}[IR slack and toll surplus]\label{def:slack}
For each operator type $\theta \in \Theta$, define
\begin{align*}
S_\theta \;&:=\; U^0_{\theta\theta} - \bar U_\theta,
\qquad &&\text{(\emph{IR slack})}\\
\Delta_\theta \;&:=\; \E_\theta\!\left[\mathrm{TollNet}_{\theta,\theta}\right] - \mathrm{Cost}_\theta,
\qquad &&\text{(\emph{toll surplus})}
\end{align*}
where $U^0_{\theta\theta}$ and $\mathrm{TollNet}_{\theta,\theta}$ are
the post-(a)--(d) pre-penalty utility and cumulative payable toll of
Section~\ref{sec:composition}, $\bar U_\theta$ is the operator's
outside-option utility, and $\mathrm{Cost}_\theta$ is the
underwriter's expected loss-cum-overhead cost under deployed type
$\theta$ and truthful contract-following.
\end{definition}

\begin{assumption}[Positive IR slack and toll surplus]\label{ass:positivity}
$S_\theta \ge 0$ and $\Delta_\theta \ge 0$ for every $\theta \in \Theta$.
\end{assumption}

$S_\theta \ge 0$ is the operator-side condition that joining the
contract is at least as good as the outside option before any ex-ante
premium is charged. $\Delta_\theta \ge 0$ is the underwriter-side
condition that the expected payable toll under truth covers the
expected cost. The latter is treated as a separate actuarial loading
assumption: Paper~A's conservative-envelope guarantee
(Paper~A, Assumption~16) gives high-probability dominance of toll over
realised loss within the calibration event of probability at least
$1 - \alpha$, but expected dominance
$\E_\theta[\mathrm{TollNet}_{\theta,\theta}] \ge \mathrm{Cost}_\theta$
requires additional control over the residual probability-$\alpha$
tail. Standard actuarial constructions---loading factor on the
expected loss, capped excess layer, or reinsurance treaty---supply
this dominance; we take $\Delta_\theta \ge 0$ as the contract-design
input rather than derive it from Paper~A's envelope alone.

The premium schedule $\{P_\theta\}$ enters the operator's
type-reporting incentive because a deviation from true type $\theta$
to report $r$ switches the ex-ante premium from $P_\theta$ to $P_r$.
The relevant IC bound for the composition contract is therefore the
\emph{premium-aware} minimum penalty $\widetilde{\kappa}_r^\star$
introduced in clause (C5) of Definition~\ref{def:composition}; the
following lemma verifies that the bound is sufficient and tight.

\begin{lemma}[Premium-aware type-reporting IC]\label{lem:premium-ic}
Fix a composition contract (Definition~\ref{def:composition}) with
premium schedule $\{P_\theta\}$ in clause (C6). Under
Assumption~\ref{ass:detect-net} (premium-aware detectability),
truthful type-reporting is weakly dominant for the operator if and
only if, for every $r \in \Theta$,
\[
\kappa_r \;\ge\; \widetilde{\kappa}_r^\star
\;=\; \max_{\theta \ne r:\, q_{\theta r} > 0}
\frac{[\widetilde{\Delta}_{\theta r}]_+}{q_{\theta r}}.
\]
Moreover, $\widetilde{\kappa}_r^\star$ is the componentwise minimal
penalty schedule that implements truthfulness given the premium
schedule.
\end{lemma}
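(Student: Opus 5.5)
The plan is to reduce Lemma~\ref{lem:premium-ic} to Theorem~\ref{thm:type} by a change of variables. The verification penalty depends only on the declared report $r$, and the premium $P_r$ is charged on the report. So every quantity in the premium-aware problem has the same algebraic form as in Section~\ref{sec:type}, once the pre-penalty utility is relabelled.

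Define the premium-adjusted pre-penalty utility $\widetilde U^0_{\theta r} := U^0_{\theta r} - P_r$. Then the operator's expected utility under report $r \neq \theta$ is
\[
U_{\theta r} = \widetilde U^0_{\theta r} - q_{\theta r}\,\kappa_r ,
\]
as in Step~3 of the proof of Theorem~\ref{thm:composition}. On the diagonal, $U_{\theta\theta} = \widetilde U^0_{\theta\theta}$, using the convention $q_{\theta\theta} = 0$ (equivalently $\kappa_{\theta\theta}=0$). The corresponding gross gain is
\[
\widetilde U^0_{\theta r} - \widetilde U^0_{\theta\theta} = \Delta_{\theta r} + P_\theta - P_r = \widetilde\Delta_{\theta r}.
\]
Under this relabelling, Assumption~\ref{ass:detect-net} is exactly Assumption~\ref{ass:detect-profitable} stated for the tilde quantities. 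The detection probabilities $q_{\theta r}$ are unchanged, because the verification protocol of Definition~\ref{def:type-contract}(iii) compares observed $(\hat f,\hat d)$ with the declared type and never sees premiums.

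Theorem~\ref{thm:type} is stated for an arbitrary array $U^0_{\theta r}$ on a finite $\Theta$. Its proof (Steps~1--5) uses only three facts: linearity in the report distribution, the form $U_{\theta r} = U^0_{\theta r} - q_{\theta r}\kappa_r$, and the detectability assumption. Applying it to $\widetilde U^0$ should therefore yield both the iff characterisation $\kappa_r \ge \widetilde\kappa_r^\star$ and the componentwise minimality claim.

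For a self-contained argument I would retrace the steps briefly.
\begin{itemize}
\item \emph{Reduction to pure reports.} Reduce weak dominance to the pairwise conditions $q_{\theta r}\kappa_r \ge \widetilde\Delta_{\theta r}$ by linearity in the report distribution.
\item \emph{Case split.} If $\widetilde\Delta_{\theta r}\le 0$, the condition holds automatically since $\kappa_r \ge 0$. If $\widetilde\Delta_{\theta r}>0$, then $q_{\theta r}>0$ by Assumption~\ref{ass:detect-net}, and dividing gives the ratio bound.
\item \emph{Sufficiency and necessity.} Take the maximum of the ratio bounds over the index set $\{\theta \neq r : q_{\theta r}>0\}$, with the empty-max convention $\max\emptyset = 0$.
\item \emph{Minimality.} If $\kappa_r < \widetilde\kappa_r^\star$, the maximiser $\theta^\dagger$ satisfies $\widetilde\Delta_{\theta^\dagger r}>0$ and strictly prefers reporting $r$. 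This contradicts weak dominance.
\end{itemize}

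The main obstacle is conceptual rather than computational: justifying that the premium enters only through the report. I expect to need four checks.
\begin{itemize}
\item $P_r$ does not depend on the detection outcome. It is collected at $t=0$ under clause~(C6) and is not refunded upon detection; if refunds on detection were allowed, the form of $U_{\theta r}$ would change.
\item $q_{\theta r}$ does not depend on $P$, which follows from the verification protocol.
\item The premium schedule is fixed before $\kappa$ is chosen. Otherwise $\widetilde\kappa_r^\star$ would be defined circularly.
\item Nonnegativity $\kappa_r \ge 0$ must be imposed. It is used in Case~A, where $\widetilde\Delta_{\theta r} \le 0$, and in the minimality step when the index set is empty.
\end{itemize}
Once these are stated, the lemma follows as a corollary of Theorem~\ref{thm:type}. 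Setting $P\equiv 0$ recovers Theorem~\ref{thm:type} itself, which serves as a consistency check.
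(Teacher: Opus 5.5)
Your proposal is correct and is essentially the paper's own argument: the paper also replaces $U^0_{\theta r}$ by the post-premium pre-penalty utility $U^0_{\theta r} - P_r$, derives the pairwise condition $q_{\theta r}\kappa_r \ge \widetilde{\Delta}_{\theta r}$, and reruns Steps~1--5 of Theorem~\ref{thm:type} with Assumption~\ref{ass:detect-net} in place of Assumption~\ref{ass:detect-profitable}. The paper leaves your four side checks implicit (premium not refunded on detection, $q_{\theta r}$ independent of $P$, premiums fixed before penalties, $\kappa_r \ge 0$), but they are consistent with its setup.
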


\begin{proof}[Proof of Lemma~\ref{lem:premium-ic}]
The argument follows Theorem~\ref{thm:type} (Steps~1--5 of its
proof), with one substitution: the operator's
post-premium pre-penalty utility under report $r$ is
$U^0_{\theta r} - P_r$ rather than $U^0_{\theta r}$, since the
premium $P_r$ is collected at $t = 0$ from operators declaring
$\hat\theta = r$.

For $r \ne \theta$, the operator's pairwise IC condition then becomes
\[
(U^0_{\theta\theta} - P_\theta) \;\ge\; (U^0_{\theta r} - P_r) - q_{\theta r}\,\kappa_r,
\]
which rearranges to
\[
q_{\theta r}\,\kappa_r \;\ge\; (U^0_{\theta r} - P_r) - (U^0_{\theta\theta} - P_\theta) \;=\; \widetilde{\Delta}_{\theta r}.
\]
The case split of Theorem~\ref{thm:type}'s Step~2 carries over with
$\widetilde{\Delta}_{\theta r}$ replacing $\Delta_{\theta r}$:
Case~A ($\widetilde{\Delta}_{\theta r} \le 0$) gives the inequality
automatically since the left-hand side is non-negative; Case~B
($\widetilde{\Delta}_{\theta r} > 0$) requires the right-hand side
divided by $q_{\theta r}$ to be a finite bound, which requires
$q_{\theta r} > 0$. The premium-aware detectability assumption
(Assumption~\ref{ass:detect-net}) supplies this: it asserts
$\widetilde{\Delta}_{\theta r} > 0 \Rightarrow q_{\theta r} > 0$,
covering exactly the case at hand. (The original
Assumption~\ref{ass:detect-profitable} of
Section~\ref{sec:type} would not suffice when $P_\theta - P_r > 0$
makes $\widetilde{\Delta}_{\theta r} > 0$ while $\Delta_{\theta r} \le 0$,
since the original assumption is stated for gross $\Delta_{\theta r}$
only.) The pairwise IC then yields
$\kappa_r \ge [\widetilde{\Delta}_{\theta r}]_+ / q_{\theta r}$;
taking the maximum over $\theta \ne r$ recovers
$\widetilde{\kappa}_r^\star$. Sufficiency and necessity follow as in
Theorem~\ref{thm:type}'s Steps~3--5, and componentwise minimality
follows from Step~5: any $\kappa_r < \widetilde{\kappa}_r^\star$ fails
the pairwise IC for the binding $(\theta, r)$ pair attaining the
maximum.

In the premium-free baseline ($P_\theta \equiv 0$),
$\widetilde{\Delta}_{\theta r} = \Delta_{\theta r}$ and
Assumption~\ref{ass:detect-net} reduces to
Assumption~\ref{ass:detect-profitable}; the result then reduces to
Theorem~\ref{thm:type}.
\end{proof}

\begin{definition}[IR-feasibility and weak budget balance]\label{def:ir-bb}
Fix a composition contract (Definition~\ref{def:composition}) with
premium schedule $\{P_\theta\}_{\theta \in \Theta}$ and a prior
$\mu \in \Delta(\Theta)$ over types. The schedule is:
\begin{itemize}[leftmargin=1.5em,itemsep=2pt]
  \item \emph{IR-feasible} if $U^0_{\theta\theta} - P_\theta \ge \bar U_\theta$
  for every $\theta \in \Theta$, equivalently $P_\theta \le S_\theta$.
  \item \emph{Weakly budget-balanced} at the
  Theorem~\ref{thm:composition} equilibrium if the underwriter's
  expected net revenue from truthful operators is non-negative:
  \[
  \sum_{\theta \in \Theta} \mu(\theta)\,\bigl[P_\theta + \Delta_\theta\bigr] \;\ge\; 0.
  \]
\end{itemize}
Penalty revenue from $\kappa_r$ is zero at the truthful equilibrium of
Theorem~\ref{thm:composition} ($q_{\theta\theta} = 0$) and therefore
does not appear in the budget identity; off-equilibrium-path penalty
revenue is treated in Remark~\ref{rem:off-eq} below.
\end{definition}

\begin{theorem}[Constructive IR + BB premium schedule]\label{thm:ir-bb}
Suppose Assumption~\ref{ass:positivity} holds. Let
$\bar S := \E_\mu[S_\theta]$ and $\bar \Delta := \E_\mu[\Delta_\theta]$
denote the prior expectations of the IR slack and toll surplus. For
any pair $(\alpha, \beta) \in [0, 1] \times \R_+$ satisfying the
\emph{budget-balance feasibility condition}
\begin{equation}
\alpha\,\bar S + (1 - \beta)\,\bar \Delta \;\ge\; 0,
\label{eq:bb-feas}
\end{equation}
the premium schedule
\begin{equation}
P_\theta \;=\; \alpha\,S_\theta - \beta\,\Delta_\theta
\qquad \forall\, \theta \in \Theta
\label{eq:premium-schedule}
\end{equation}
is IR-feasible and weakly budget-balanced.
\end{theorem}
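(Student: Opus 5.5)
The plan is to verify the two defining inequalities of Definition~\ref{def:ir-bb} directly from the linear form \eqref{eq:premium-schedule}. Each reduces to a one-line sign argument using Assumption~\ref{ass:positivity} and the parameter ranges $\alpha \in [0,1]$, $\beta \ge 0$. No earlier theorem is needed beyond the definitions of $S_\theta$ and $\Delta_\theta$ (Definition~\ref{def:slack}) and the budget identity of Definition~\ref{def:ir-bb}. Penalty revenue is absent from that identity at the truthful equilibrium because $q_{\theta\theta}=0$.

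\emph{Step 1 (IR-feasibility, pointwise in $\theta$).} By Definition~\ref{def:ir-bb}, IR-feasibility is equivalent to $P_\theta \le S_\theta$ for each $\theta$. I will write
\[
S_\theta - P_\theta \;=\; (1-\alpha)\,S_\theta + \beta\,\Delta_\theta .
\]
Since $1-\alpha \ge 0$ and $S_\theta \ge 0$, the first term is non-negative. Since $\beta \ge 0$ and $\Delta_\theta \ge 0$, the second term is non-negative as well. Hence $U^0_{\theta\theta} - P_\theta \ge \bar U_\theta$ for every $\theta \in \Theta$. This is the only place where both positivity conditions of Assumption~\ref{ass:positivity} and the constraint $\alpha \le 1$ are used. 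Note that $P_\theta$ may be negative (a net rebate), which is harmless for IR.

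\emph{Step 2 (weak budget balance, in prior expectation).} I will substitute \eqref{eq:premium-schedule} into the budget expression and use linearity of $\E_\mu$:
\[
\sum_\theta \mu(\theta)\bigl[P_\theta + \Delta_\theta\bigr]
\;=\; \alpha\,\bar S - \beta\,\bar\Delta + \bar\Delta
\;=\; \alpha\,\bar S + (1-\beta)\,\bar\Delta .
\]
This is non-negative exactly by the feasibility condition \eqref{eq:bb-feas}. Finiteness of $\Theta$ makes the interchange of sum and expectation trivial.

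I expect no genuine obstacle; the only point needing care is interpretive. The statement asserts IR and BB ``at the truthful equilibrium'', while the premium schedule also feeds into the premium-aware penalty bound $\widetilde{\kappa}_r^\star$ of clause (C5) through $\widetilde{\Delta}_{\theta r} = \Delta_{\theta r} + P_\theta - P_r$. I will remark that the theorem takes $\{P_\theta\}$ as given and that (C5) is then instantiated at the resulting $\widetilde{\kappa}_r^\star$. Under Assumption~\ref{ass:detect-net}, Lemma~\ref{lem:premium-ic} keeps truthfulness weakly dominant, so the equilibrium at which the budget is evaluated is indeed the truthful one with zero penalty revenue. Finally, I will note the three special cases: $\beta = 1$ with $\alpha = 0$ gives exact balance relative to $\bar\Delta$; $\beta = 0$ gives pure load; intermediate pairs give the mixed contract.
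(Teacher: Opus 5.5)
Your proof is correct and is essentially the paper's own two-step argument: IR via $U^0_{\theta\theta}-P_\theta=(1-\alpha)S_\theta+\beta\Delta_\theta+\bar U_\theta\ge\bar U_\theta$, and weak budget balance by substituting the schedule into the budget sum to get $\alpha\bar S+(1-\beta)\bar\Delta\ge 0$. One quibble concerns your closing aside only: the paper's pure-load and mixed contracts are the specific points $(\alpha,\beta)=(1,0)$ and $(1,1)$, not every $\beta=0$ pair or arbitrary intermediate pairs.
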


\begin{proof}[Proof of Theorem~\ref{thm:ir-bb}]
The proof has two steps.

\medskip
\noindent\emph{Step 1 (IR-feasibility).}
Fix any $\theta \in \Theta$. Using
$U^0_{\theta\theta} = S_\theta + \bar U_\theta$ and the definition of
$P_\theta$,
\begin{align*}
U^0_{\theta\theta} - P_\theta
&= S_\theta + \bar U_\theta - \alpha\,S_\theta + \beta\,\Delta_\theta\\
&= (1 - \alpha)\,S_\theta + \beta\,\Delta_\theta + \bar U_\theta.
\end{align*}
By Assumption~\ref{ass:positivity} both $S_\theta \ge 0$ and
$\Delta_\theta \ge 0$; by hypothesis $1 - \alpha \ge 0$ and
$\beta \ge 0$. Hence $(1 - \alpha) S_\theta + \beta \Delta_\theta \ge 0$,
so $U^0_{\theta\theta} - P_\theta \ge \bar U_\theta$. IR-feasibility
holds for every $\theta$.

\medskip
\noindent\emph{Step 2 (Weak budget balance).}
Substituting the schedule~\eqref{eq:premium-schedule} into the BB
identity of Definition~\ref{def:ir-bb},
\begin{align*}
\sum_{\theta} \mu(\theta)\,\bigl[P_\theta + \Delta_\theta\bigr]
&= \sum_{\theta} \mu(\theta)\,\bigl[\alpha S_\theta - \beta \Delta_\theta + \Delta_\theta\bigr]\\
&= \alpha \sum_\theta \mu(\theta) S_\theta + (1 - \beta) \sum_\theta \mu(\theta) \Delta_\theta\\
&= \alpha\,\bar S + (1 - \beta)\,\bar \Delta.
\end{align*}
By hypothesis~\eqref{eq:bb-feas} this quantity is non-negative; weak
budget balance holds.
\end{proof}

\begin{corollary}[Three natural parameterisations]\label{cor:ir-bb-params}
Under Assumption~\ref{ass:positivity}, each of the following
$(\alpha, \beta)$ choices satisfies~\eqref{eq:bb-feas} without further
constraint, and the corresponding $P_\theta$ is IR-feasible and weakly
budget-balanced:
\begin{enumerate}[label=\textup{(\alph*)},leftmargin=1.8em,itemsep=2pt]
  \item \emph{Pure-load contract} $(\alpha = 1, \beta = 0)$:
  $P_\theta = S_\theta$. The contract extracts the operator's full IR
  slack as ex-ante load; truthful participation is IR with equality.
  Budget surplus is $\bar S + \bar \Delta \ge 0$.
  \item \emph{Toll-rebate contract} $(\alpha = 0, \beta = 1)$:
  $P_\theta = -\Delta_\theta$. The contract rebates the expected toll
  surplus to the operator at $t = 0$ as a negative premium. Budget
  identity is exact: $\bar S \cdot 0 + \bar \Delta \cdot 0 = 0$. IR
  holds strictly whenever $S_\theta > 0$ for some $\theta$, with
  slack equal to $S_\theta + \Delta_\theta$.
  \item \emph{Mixed contract} $(\alpha = 1, \beta = 1)$:
  $P_\theta = S_\theta - \Delta_\theta$. The contract extracts the IR
  slack as load while rebating the expected toll surplus; the two
  effects partially cancel. IR holds with strict slack $\Delta_\theta$;
  budget surplus is $\bar S \ge 0$.
\end{enumerate}
These are three representative parameterisations of the
feasibility region $\{(\alpha, \beta) \in [0, 1] \times \R_+ :
\alpha \bar S + (1 - \beta) \bar \Delta \ge 0\}$, not an exhaustive
Pareto characterisation: (a) maximises the underwriter's expected
revenue subject to IR; (b) gives pointwise zero net revenue per type
($P_\theta + \Delta_\theta = 0$); (c) extracts IR slack while
rebating expected toll. Interior points $(\alpha, \beta) \in (0, 1)
\times (0, 1)$ and other feasible boundary points may matter for
specific welfare or fairness objectives; a formal Pareto
characterisation under a stated welfare order is left as future
work.
\end{corollary}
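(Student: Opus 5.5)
The plan is to read all three cases off the two identities already established in the proof of Theorem~\ref{thm:ir-bb}, without any new construction. For a general pair $(\alpha,\beta)$ and the schedule~\eqref{eq:premium-schedule}, Step~1 of that proof gives the per-type IR identity
\[
U^0_{\theta\theta} - P_\theta - \bar U_\theta \;=\; (1-\alpha)\,S_\theta + \beta\,\Delta_\theta .
\]
Step~2 gives the budget identity
\[
\sum_\theta \mu(\theta)\,[P_\theta + \Delta_\theta] \;=\; \alpha\,\bar S + (1-\beta)\,\bar\Delta .
\]
First I check that each listed pair lies in $[0,1]\times\R_+$. Then I check that it satisfies~\eqref{eq:bb-feas} using only Assumption~\ref{ass:positivity}, which gives $\bar S \ge 0$ and $\bar\Delta \ge 0$ as prior averages of non-negative quantities. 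Theorem~\ref{thm:ir-bb} then delivers IR-feasibility and weak budget balance for all three cases at once. What remains is to evaluate the two identities at each pair to obtain the specific slack and surplus claims.

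\emph{Case (a), $(\alpha,\beta)=(1,0)$.} The schedule is $P_\theta = S_\theta$. The IR identity evaluates to $0$, so IR holds with equality. The budget identity evaluates to $\bar S + \bar\Delta \ge 0$. For the revenue-maximisation remark I use the IR-feasibility characterisation $P_\theta \le S_\theta$ from Definition~\ref{def:ir-bb}. Summing it against $\mu$ shows that expected net revenue $\sum_\theta \mu(\theta)[P_\theta+\Delta_\theta]$ is at most $\bar S+\bar\Delta$ for any IR-feasible schedule, and case (a) attains this bound.

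\emph{Case (b), $(\alpha,\beta)=(0,1)$.} The schedule is $P_\theta = -\Delta_\theta$, so $P_\theta+\Delta_\theta=0$ holds pointwise for every type, which gives the exact budget identity $0$. The IR identity evaluates to $S_\theta + \Delta_\theta$, matching the stated slack. Since $\Delta_\theta \ge 0$, this slack is strictly positive at every type $\theta$ with $S_\theta>0$, which is the stated strictness.

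\emph{Case (c), $(\alpha,\beta)=(1,1)$.} The schedule is $P_\theta = S_\theta-\Delta_\theta$. The IR identity evaluates to $\Delta_\theta$, which is the stated slack; it is strictly positive exactly when $\Delta_\theta>0$. The budget identity evaluates to $\bar S \ge 0$.

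The only step needing any care is not an obstacle but a matter of bookkeeping: keeping the strictness qualifiers tied to the correct terms. In (b) strictness comes from $S_\theta>0$, with $\Delta_\theta\ge 0$ as the supporting inequality. In (c) the slack is exactly $\Delta_\theta$. The revenue-maximisation claim in (a) is the one assertion not literally contained in Theorem~\ref{thm:ir-bb}, and it is handled by summing the IR-feasibility bound $P_\theta\le S_\theta$ as above.
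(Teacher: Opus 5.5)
Your proposal is correct and follows the paper's implicit argument: the corollary has no separate proof in the paper and is simply Theorem~\ref{thm:ir-bb} specialised to the three pairs, with the slack and surplus values read off the Step~1 IR identity and the Step~2 budget identity. Two parts of your write-up go beyond the paper: the extra step of summing $P_\theta \le S_\theta$ against $\mu$ to justify the revenue-maximisation claim in (a), and your per-type handling of the strictness qualifiers in (b) and (c) (in (c) the slack $\Delta_\theta$ is strict only when $\Delta_\theta > 0$); both tighten assertions the paper states without proof.
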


\begin{theorem}[Joint IC + IR + BB composition]\label{thm:full-composition}
Fix a composition contract (Definition~\ref{def:composition}), with
the premium schedule of clause (C6) chosen as in
Theorem~\ref{thm:ir-bb} for some
$(\alpha, \beta) \in [0,1] \times \R_+$ satisfying~\eqref{eq:bb-feas},
and the penalty schedule of clause (C5) satisfying the
\emph{premium-aware} bound
$\kappa_r \ge \widetilde{\kappa}_r^\star$ for every $r \in \Theta$
(with $\widetilde{\kappa}_r^\star$ as in clause (C5) and verified by
Lemma~\ref{lem:premium-ic}). Assume
Assumption~\ref{ass:static-deploy} (static deployment),
Assumption~\ref{ass:detect-net} (premium-aware detectability of
profitable misreports), and Assumption~\ref{ass:positivity} (positive
IR slack and toll surplus). Then:
\begin{enumerate}[label=\textup{(\roman*)},leftmargin=1.8em,itemsep=2pt]
  \item Truthful contract-following is weakly dominant for the
  operator across the five-attack space of
  Definition~\ref{def:attacks};
  \item Truthful participation is individually rational for every
  $\theta \in \Theta$;
  \item The mechanism is weakly budget-balanced at the truthful
  equilibrium.
\end{enumerate}
\end{theorem}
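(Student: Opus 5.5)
The proof assembles results already established in the paper. The one nontrivial point is that Theorem~\ref{thm:composition} takes Assumption~\ref{ass:ir} as a hypothesis. Here that hypothesis must instead be discharged by the constructive premium schedule of Theorem~\ref{thm:ir-bb}. The plan has three steps, one per conclusion, taken in the order (ii), (i), (iii). Part~(ii) comes first because it supplies the missing hypothesis for part~(i).

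\emph{Step 1 (IR, part (ii)).} The premium schedule is $P_\theta = \alpha S_\theta - \beta \Delta_\theta$ with $(\alpha,\beta) \in [0,1]\times\R_+$ satisfying~\eqref{eq:bb-feas}. Invoke Step~1 of the proof of Theorem~\ref{thm:ir-bb} under Assumption~\ref{ass:positivity}. It gives $U^0_{\theta\theta} - P_\theta = (1-\alpha)S_\theta + \beta\Delta_\theta + \bar U_\theta \ge \bar U_\theta$ for every $\theta$. This is exactly Assumption~\ref{ass:ir} for the chosen schedule, which proves~(ii). It also removes the need to carry Assumption~\ref{ass:ir} as an independent hypothesis.

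\emph{Step 2 (IC, part (i)).} The hypotheses of Theorem~\ref{thm:composition} are now all in force: Assumption~\ref{ass:static-deploy}, Assumption~\ref{ass:ir} (from Step~1), Assumption~\ref{ass:detect-net}, and (C5) at the premium-aware bound $\kappa_r \ge \widetilde\kappa_r^\star$. Apply that theorem directly to obtain weak dominance of truthful contract-following over the five-attack space.

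The main obstacle is consistency. The $\widetilde\Delta_{\theta r} = \Delta_{\theta r} + P_\theta - P_r$ entering $\widetilde\kappa_r^\star$ must be computed with the same premiums as in Theorem~\ref{thm:ir-bb}. Since $P_\theta$ depends on $S_\theta$ and $\Delta_\theta$, one must check there is no circularity. This holds because $S_\theta$ and $\Delta_\theta$ are defined from truthful-path quantities ($U^0_{\theta\theta}$, $\mathrm{TollNet}_{\theta,\theta}$, $\mathrm{Cost}_\theta$), and none of these depends on $\kappa$. The premiums can therefore be fixed first, and the penalties set afterwards via Lemma~\ref{lem:premium-ic}.

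Assumption~\ref{ass:detect-net} is evaluated at these premiums. It guarantees that every net-profitable deviation has $q_{\theta r}>0$, so each $\widetilde\kappa_r^\star$ is finite and (C5) is satisfiable. A further point to record is that Lemma~\ref{lem:report-indep}(d) relies on (C4). Condition (C4) involves $\mu_{c,\theta,r}$, which is unaffected by the premiums, so the closures of (a)--(d) remain valid.

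\emph{Step 3 (BB, part (iii)).} Under truthful contract-following every operator reports $r=\theta$. Hence $q_{\theta\theta}=0$, and penalty revenue vanishes on the equilibrium path, as noted in Definition~\ref{def:ir-bb}. The underwriter's expected net revenue is therefore $\sum_\theta \mu(\theta)[P_\theta + \Delta_\theta]$. Step~2 of the proof of Theorem~\ref{thm:ir-bb} evaluates this as $\alpha\bar S + (1-\beta)\bar\Delta \ge 0$ by~\eqref{eq:bb-feas}, which proves~(iii) and completes the proof.
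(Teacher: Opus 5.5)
Your proposal is correct and follows essentially the paper's route: part (i) comes from Theorem~\ref{thm:composition} (Lemma~\ref{lem:report-indep} for (a)--(d) plus Lemma~\ref{lem:premium-ic} for (e)), and parts (ii)--(iii) come from the two steps of Theorem~\ref{thm:ir-bb}. Proving (ii) first explicitly discharges Assumption~\ref{ass:ir}, a hypothesis of Theorem~\ref{thm:composition} that the paper's proof leaves implicit, and that plus your premium-then-penalty non-circularity check is welcome extra care; one small nit is that $\widetilde\kappa_r^\star$ is finite automatically because the maximum ranges only over $q_{\theta r}>0$, and the actual role of Assumption~\ref{ass:detect-net} is to ensure that no net-profitable deviation is excluded from that maximum.
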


\begin{proof}[Proof of Theorem~\ref{thm:full-composition}]
(i) restates Theorem~\ref{thm:composition} with the (C5) bound
strengthened to the premium-aware version: attack classes (a)--(d)
remain closed by Lemma~\ref{lem:report-indep} (which is parametric in
the contract parameters and unaffected by the premium schedule), and
attack class (e) is closed by Lemma~\ref{lem:premium-ic} under
$\kappa_r \ge \widetilde{\kappa}_r^\star$, which is the premium-aware
analogue of Theorem~\ref{thm:type} used in Theorem~\ref{thm:composition}'s
Step~4. The composition argument of Steps~1--4 of
Theorem~\ref{thm:composition}'s proof carries over verbatim with
$\widetilde{\kappa}_r^\star$ replacing $\kappa_r^\star$ and
post-premium pre-penalty utilities $U^0_{\theta r} - P_r$ replacing
$U^0_{\theta r}$.

(ii) and (iii) are Theorem~\ref{thm:ir-bb} applied to the chosen
$(\alpha, \beta)$: IR-feasibility of $P_\theta = \alpha S_\theta - \beta
\Delta_\theta$ gives (ii); weak budget balance at the truthful
equilibrium under condition~\eqref{eq:bb-feas} gives (iii).

The three results hold simultaneously because Lemma~\ref{lem:report-indep}'s
closure of (a)--(d) is independent of the premium schedule and
penalty schedule, while Lemma~\ref{lem:premium-ic} handles (e) under
the premium-aware bound; the bound is met by construction since (C5)
is required to satisfy $\kappa_r \ge \widetilde{\kappa}_r^\star$ for
the premium schedule of (C6).
\end{proof}

\begin{remark}[Off-equilibrium penalty revenue]\label{rem:off-eq}
At the truthful equilibrium of Theorem~\ref{thm:composition}, no
operator misreports and $q_{\theta\theta} = 0$, so the penalty
revenue $\sum_\theta \mu(\theta)\,q_{\theta r}\,\kappa_r$ is zero
in expectation. Under noisy reporting models---e.g.\ a
trembling-hand perturbation in which truthful reporting is randomised
with probability-$\epsilon$ misreport noise---expected penalty revenue
is positive and could in principle be rebated to truthful types via
an ex-post refund $R_\theta \ge 0$. Such a rebate is an
\emph{informal extension} of Theorem~\ref{thm:ir-bb} and is not
automatic: any rebate rule that depends on the operator's report or
execution path perturbs the pairwise IC condition of
Lemma~\ref{lem:premium-ic}, so the rebate rule must be either
report-and-execution independent (e.g.\ a uniform rebate across all
declared types) or conditional only on verified truthfulness (paid
ex-post once the verification protocol clears the declared type).
A formal analysis under either restriction, including its effect on
the BB feasibility region, is left for future work
(Section~\ref{sec:obligations}, item~1).
\end{remark}

\begin{remark}[Welfare and the operator's reservation type]\label{rem:welfare-tradeoff}
The two-parameter family~\eqref{eq:premium-schedule} exposes the
mechanism designer's welfare trade-off. The toll-rebate contract
($\alpha = 0$) is operator-optimal among IR-feasible schedules; the
pure-load contract ($\alpha = 1$) is underwriter-optimal among
weakly-BB schedules. Intermediate $(\alpha, \beta)$ choices
correspond to bargaining points within the feasibility region of
Corollary~\ref{cor:ir-bb-params}. The result is silent on which point
to choose: that depends on principal preferences, competitive
constraints, and considerations outside the
Theorem~\ref{thm:composition} IC scope. A formal Pareto
characterisation under a stated welfare order is left for future
work.
\end{remark}

\section{Related Work}\label{sec:related}

\paragraph{Mechanism design and strategy-proofness.}
The contract-design questions here are classical mechanism design. The
Vickrey--Clarke--Groves mechanism \citep{vickrey1961counterspeculation,
clarke1971multipart, groves1973incentives} achieves dominant-strategy
truthfulness for quasilinear agents but requires budget-breaking
transfers; the Gibbard--Satterthwaite theorem
\citep{gibbard1973manipulation, satterthwaite1975strategy} shows that
strategy-proofness over unrestricted preferences forces dictatorship; and
Myerson--Satterthwaite \citep{myerson1983efficient} shows that no
mechanism is simultaneously efficient, individually rational, and
budget-balanced. We seek neither efficiency nor VCG-style transfers: the
runtime contract operates under a \emph{fixed} budget $B_0$ inherited from
Paper~A, and we ask only for incentive compatibility against a catalogued
attack space, which is why these impossibility results do not bind. The
penalty schedule of Theorem~\ref{thm:type} is a Myerson-style
\citep{myerson1981} virtual-cost construction specialised to a discrete
menu of model identities.

\paragraph{Contract theory and principal--agent models.}
The operator--underwriter relationship is a principal--agent problem with
both hidden action (which model the operator deploys) and hidden type
(which identity it reports), combining the moral-hazard frictions of
\citet{holmstrom1979moral} with the screening tradition
\citep{laffont2002theory}. Our contribution is to instantiate these
frictions in a \emph{runtime} setting where the action is a stream of
priced tool calls and the screening device is a menu of identity-indexed
safe defaults rather than a monetary transfer schedule.

\paragraph{AI agent control, guardrails, and safety.}
A large body of work governs the \emph{agent}: reinforcement learning
from human feedback \citep{ouyang2022rlhf}, constitutional methods
\citep{bai2022constitutional}, programmable guardrails
\citep{nvidia2023nemo}, capability-based access control
\citep{miller2006capabilities}, and threat catalogues such as the OWASP
LLM Top~10 \citep{owasp2024llm}. These shape what the agent does; they do
not price what a \emph{strategic operator} is incentivised to deploy.
The present paper is complementary: it treats interface reliability and model
identity as contractual variables and asks when the contract, not the
agent, is gaming-resistant. The interface-compliance result
(Theorem~\ref{thm:interface}) makes the relationship concrete: a
guardrail-style ``invalid output is a safe outcome'' reflex is exactly the
adjudication rule that creates a perverse operator incentive.

\paragraph{AI risk quantification, insurance, and evaluation.}
Closest in spirit is recent work quantifying financial risk for AI agents
\citep{hua2026quantifying} and algorithmic / data-driven insurance
\citep{bertsimas2021algorithmic}, alongside the established cyber-risk
insurance literature \citep{eling2016cyber}. These price or measure risk
but do not characterise operator gaming under the resulting contract. On
the measurement side, agent benchmarks such as AgentBench
\citep{liu2023agentbench}, $\tau$-bench \citep{yao2024taubench}, and
SWE-bench \citep{jimenez2024swebench} evaluate capability; the companion
Paper~B \citep{chen2026insuring} repurposes such environments for
actuarial measurement, and the present paper supplies the contract theory
that makes those measurements incentive-relevant. The program rests on the
time-consistent counterfactual runtime of Paper~A
\citep{chen2026runtime}.

\section{Conclusion}\label{sec:conclusion}

Paper~A specifies a runtime actuarial contract that prices each
side-effect-bearing action against a safe default under a conservative
budget envelope, but treats the operator as a passive deployment. This
paper makes the operator strategic and asks where that contract is
already gaming-resistant and where it needs additional clauses. We
catalogue a five-attack space (Definition~\ref{def:attacks}): two
surfaces---within-boundary splitting and post-toll safe-default
selection---are closed by Paper~A, and the remaining three are closed
here. Cross-boundary aggregation (Theorem~\ref{thm:boundary-agg}) shows
that under common control no finite re-routing of priced actions can
reduce total toll below the boundary potential applied to total
exposure. Interface-compliance adjudication
(Theorem~\ref{thm:interface}) shows that an invalid tool call is a
contract-relevant, not safety-relevant, event, and that the two
principled adjudication rules induce opposite deployment-time
incentives---a distinction we validate quantitatively on committed
cross-model runs in Remark~\ref{rem:t2-validation}. Strategy-proof
model-identity reporting (Theorem~\ref{thm:type}) gives a menu of
identity-indexed safe defaults under which truthful reporting of the
deployed model is weakly dominant, with the componentwise-minimum
penalty schedule given by a Myerson-style ratio. Bundling these with the
Paper~A clauses (Theorem~\ref{thm:composition}) yields joint incentive
compatibility over the entire attack space, and the operator
participation constraint is discharged constructively by a two-parameter
premium family that is simultaneously individually rational and weakly
budget-balanced at the truthful equilibrium (Section~\ref{sec:ir-bb}).
Together these results constitute a first incentive-compatibility
characterisation of the actuarial runtime contract, with participation
and budget-balance constraints satisfied rather than assumed.

The characterisation is deliberately scoped to operator gaming under a
fixed budget, not social-welfare-optimal mechanism design, and it
inherits Paper~A's calibration and boundary assumptions. The remaining
obligations below sharpen these limits into concrete next steps.

\section{Open Obligations}\label{sec:obligations}

\begin{enumerate}[leftmargin=1.5em,itemsep=2pt]
  \item \emph{Non-negative premium feasibility and ex-post penalty rebates.}
  Theorem~\ref{thm:ir-bb} delivers weak budget balance at the truthful
  equilibrium under a two-parameter family of premium schedules
  (Corollary~\ref{cor:ir-bb-params} gives three representative
  points). Two refinements are left for future work. First, the
  toll-rebate parameterisation $(\alpha = 0, \beta = 1)$ produces
  $P_\theta = -\Delta_\theta < 0$ wherever $\Delta_\theta > 0$, i.e.\
  the contract pays the operator at $t = 0$. Many real-world
  contractual norms require $P_\theta \ge 0$; characterising the
  non-negative-premium subregion of the
  $(\alpha, \beta)$-feasibility region is the natural next step.
  Second, formal incorporation of the off-equilibrium penalty-rebate
  mechanism of Remark~\ref{rem:off-eq} requires its own IC analysis
  (the rebate rule must be report-and-execution independent or paid
  only after verified truthfulness, otherwise it perturbs the
  pairwise IC condition of Lemma~\ref{lem:premium-ic}).
  \item \emph{Adversarial conformal calibration.} If the operator can
  choose calibration actions, the envelope quantile $\hat q$ in
  Assumption~16 of Paper~A is exposed to a sixth attack surface not
  covered by the five-attack characterisation of
  Definition~\ref{def:attacks}, hence not closed by
  Theorem~\ref{thm:composition} (see Remark~\ref{rem:coverage}).
  Defence likely requires a contractually fixed, separately-sampled
  audit calibration pool.
  \item \emph{Welfare characterisation.} Theorem~\ref{thm:composition}
  establishes joint IC for one specific parameterisation
  (conservative envelopes, $\xi_{\mathrm{escalate}}$, premium-aware
  minimum $\widetilde{\kappa}_r^\star$). Characterising the
  IC-constrained welfare frontier---how operator welfare varies with
  $\kappa_{\mathrm{esc}}$, the choice between $\xi_{\mathrm{safe}}$
  and $\xi_{\mathrm{escalate}}$, and slack in
  $\kappa_r \ge \widetilde{\kappa}_r^\star$---is left for future work
  (cf.\ Remark~\ref{rem:welfare}).
  \item \emph{Empirical demonstration of Theorem~\ref{thm:interface}.}
  Remark~\ref{rem:t2-validation} discharges this obligation at the
  measured endpoints: it instantiates the operator-utility comparison on
  the committed cross-model runs (measured $f$ and gate-computed reserve
  $\mu_c$) and pins the part~(b) escalation-fee threshold
  $\kappa_{\mathrm{esc}}^\star$. What remains is a full runtime
  intervention---implementing $\xi_{\mathrm{safe}}$ and
  $\xi_{\mathrm{escalate}}$ in the gate and replaying at several
  $\kappa_{\mathrm{esc}}$ values---to trace the incentive endogenously
  rather than analytically.
\end{enumerate}

\bibliographystyle{plainnat}
\begingroup
\raggedright
\nocite{*}
\bibliography{references}
\endgroup

\end{document}